\newcommand{\ben}{\begin{enumerate}}
\newcommand{\een}{\end{enumerate}}
\newcommand{\be}{\begin{equation}}
\newcommand{\ee}{\end{equation}}
\newcommand{\bea}{\begin{eqnarray}}
\newcommand{\eea}{\end{eqnarray}}
\newcommand{\bc}{\begin{cases}}
\newcommand{\ec}{\end{cases}}
\newcommand{\bi}{\begin{itemize}}
\newcommand{\ei}{\end{itemize}}
\newcommand{\spa}{\,\,\,\!\!}
\newtheorem{teo}{Theorem}[section]
\newtheorem{lem}{Lemma}[section]
\newcommand{\eq}[1]{(\ref{#1})}
 \newtheorem{rem}{Remark}[section]
\def\de{\mathrm{d}}
\newcommand{\figw}{0.46\columnwidth}
\begin{document}
\title{Optimal Context Aware Transmission Strategy for   non-Orthogonal  D2D Communications}

\author{\authorblockN{{\bf Federico Librino}$^\star$, {\bf Giorgio Quer}$^\diamond$}\\
$^\star$Italian National Research Council -- 56124 Pisa, Italy,\\
$^\diamond$ University of California San Diego -- La Jolla, CA 92093, USA.
\thanks{This work has been submitted to the IEEE for possible publication. Copyright may be transferred without notice, after which this version may no longer be accessible.}
}

\maketitle

\thispagestyle{empty}

\begin{abstract}
The increasing traffic demand in cellular networks has recently led to the investigation of new strategies to save precious resources like spectrum and energy. A possible solution employs direct device-to-device (D2D) communications, which is particularly promising  when the two terminals involved in the communications are located in close proximity. 
The D2D communications should coexist with other transmissions, so they must be careful scheduled in order to avoid harmful interference impacts.
In this paper, we analyze how a distributed context-awareness, obtained by observing few local channel and topology parameters, can be used to adaptively exploit D2D communications. We develop a rigorous theoretical analysis to quantify the balance between the gain offered by a D2D transmission, and its impact on the other network communications.
Based on this analysis, we derive two theorems that define the optimal strategy to be employed, in terms of throughput maximization, when a single or multiple transmit power levels are available for the D2D communications.
We compare this strategy to the state-of-the-art in the same network scenario, showing how context awareness can be exploited to achieve a higher sum throughput and an improved fairness.
\end{abstract}

\section{Introduction}
\label{sec:introduction}
Cellular networks have undergone a widespread diffusion and a rapid development in the past decade. The fourth generation (4G) has introduced several improvements, ranging from better coding/decoding devices to novel communication strategies, as well as a smarter way of exploiting time and frequency resources.
These improvements have led to more reliable, faster and more secure communications, which are fundamental to support the large spread of mobile applications, now fostered by the pervasive utilization of social media. Correspondingly, a huge increase in the amount of traffic load has been observed and is expected to continue at even higher pace in the next future.
Sharing photos, music, videos and other types of multimedia data is becoming part of the everyday life for millions (and soon billions) of people.

In this scenario, the big challenge for the next generation of cellular networks (5G) lies in finding new ways to handle this huge amount of communications with the required Quality of Service (QoS). On one side, new frequency bands might soon become available, but spectrum remains a finite resource. On the other side, a promising approach for 5G networks is to devise new ways to reuse the available spectrum as much as possible, by allowing Non Orthogonal Multiple Access (NOMA).

In particular, the possibility of establishing direct device-to-device (D2D) communications is now attracting significant attention  \cite{Rel3GPP,over3GPP,Rel3GPPbis,com5Gnet,comD2Dcell,BookWire}. This technique might be particularly advantageous when traffic is to be exchanged locally, that is, between two terminals in close proximity. Examples of this kind of scenario are certain gaming applications, file exchange between colocated user equipments (UEs), or data downloads from low-power wireless sensors in a smart city~\cite{smartcity}.
In this situations, the standard cellular communication mode (two hops through the Base Station, BS)
might lead to a waste of resources. Instead, the two UEs could leverage the short direct link using low power communications, possibly reducing latency, costs, and energy.

Several strategies have been proposed to allow D2D communications in the presence of local traffic~\cite{comD2Dcell}. If unused spectrum resources are available, D2D communications can be performed over a dedicated, orthogonal channel, without interfering with other concurrent cellular transmissions.
However, if the spectrum is fully allocated, 
the spectrum sharing principle can be adopted, and a proper communications scheduling is necessary in order to avoid overwhelming interference. In most cases, this scheduling is organized by the BS, based on the channel state information (CSI) about all the involved channels~\cite{mumimo,comundcell}.
While this approach can achieve an optimal spectrum allocation, its drawback lies in the considerable overhead required to gather all the necessary information at the BS.

Distributed strategies have been considered as well. In this case, the UEs can exploit D2D communications autonomously, based on some geographic considerations~\cite{modpowcon}. These solutions alleviate the network from the burden of the overhead required by a centralized solution, but considering only geographic information leads to over-restrictive constraints.

In this work, instead, we opt for a different, distributed and adaptive approach based on situational awareness. We focus on the non orthogonal scenario, where the uplink resources are utilized for D2D communications.
The main idea is to make the D2D source collect some local information about the surrounding communications, infer the quality of the D2D transmission, as well as the amount of generated interference, and use this information to decide whether to start a D2D transmission or not. The result is a dynamic strategy, which exploits D2D communications only where and when this is feasible.

In order to keep the interference level under control, the BS still retains the possibility, if the QoS of the uplink communications falls below a predefined threshold, to block the D2D sources for a certain amount of time. This control mechanism forces a D2D source to carefully choose the best options to transmit, in order to optimize its performance.

In the paper, we analyze in depth the achievable throughput of our approach in a single-cell scenario, by deriving two theorems which describe the optimal strategy for the D2D source. More specifically, the organization and the contributions of this paper are summarized in the following.

\noindent \textbullet ~~We overview in Sec.~\ref{sec:relatedwork} the most recent 3GPP releases in D2D communications and the related works in which spectrum resources are shared between cellular and D2D UEs.

\noindent \textbullet ~~In Sec.~\ref{sec:tx_policy_1} we describe our system model by means of a Markov Decision Process; moreover, we prove a theorem which gives the general form of the optimal strategy to be employed by the D2D source in our scenario when a single transmit power level is available. The application of this theorem in the context of cellular networks is provided in Sec.~\ref{sec:D2Dpract_sinpow}, where a practical strategy is also introduced.

\noindent \textbullet ~~In Sec.~\ref{sec:opt_mult_act} we extend the analysis to the case in which multiple power levels can be dynamically selected  by the D2D source. A second theorem is stated and proved to illustrate the general form of the optimal strategy in this scenario, with a practical implementation described in Sec.~\ref{sec:mulpowlev}.

\noindent \textbullet ~~Finally, in Sec.~\ref{sec:results} we compare the performance of the strategy based on our approach with that of a distributed strategy recently appeared in the literature, showing how context awareness is beneficial in terms of throughput and fairness for both the D2D sources and the other cellular users.

\noindent Sec.~\ref{sec:conclu} concludes the paper and presents some future works.

\section{Related Work}
\label{sec:relatedwork}
Direct communications among mobile terminals have been envisioned by 3GPP as a promising way to improve network performance. D2D proximity services can in fact either reduce the amount of traffic handled by the BSs, or provide service beyond cellular coverage and/or in emergency scenarios, where the core network may be unavailable~\cite{Rel3GPP, Rel3GPPbis}.
Establishig and maintaining D2D connections entails a set of technical challenges, including peer discovery, resource allocation, interference management and synchronization, which are presented and discussed in~\cite{over3GPP}. Multiple-input-multiple-output (MIMO) D2D communications are investigated in~\cite{comD2Dcell}, whereas some pricing models are illustrated in~\cite{com5Gnet}.
An opportunistic multi-hop forwarding technique is presented in~\cite{dyngra}: the main aim here is to extend cellular coverage, by letting a mobile terminal forward the data packets to/from another terminal that is not within the range of a BS. Similar works along this line of research seek to reduce the density of the BSs.

Our work, on the contrary, does not aim at extending coverage, but instead at allowing direct communications between terminals in close proximity.
Authors in~\cite{intraclu} propose a technique to organize nodes into clusters, by means of a centralized scheduling, and exploit D2D communications over an orthogonal channel.
Conversely, we focus on a scenario where non-orthogonal spectrum sharing between terminals transmitting to the BS and D2D sources is employed. In this kind of scenario, two main approaches have been proposed in the state-of-the-art. The former is to let D2D sources transmit only on temporarily free channels (overlay), thus causing no extra interference; the latter is to allow D2D transmissions also on already utilized channels, but limiting the interference impact on the other ongoing communications (underlay).
An overlay scheme is developed in~\cite{cogene}, where D2D sources exploit the energy harvested by surrounding radio communications. Stochastic geometry tools are instead utilized in~\cite{concog} to analyze the performance of both overlay and underlay schemes in terms of network connectivity and coverage probability.

An interesting underlay approach comparable to ours is proposed in~\cite{modpowcon}. Here, D2D communications are performed using uplink resources, and employing power control, in order to limit interference. Furthermore, D2D connections are allowed only between terminals located in close proximity, as in our work. However, the source terminal decides whether to transmit directly to its destination or to rely on the BS based only on topological information.
This scheme avoids the need for channel sensing overhead, but it lacks the fundamental adaptivity of our approach. We briefly describe it in Sec.~\ref{sec:results}, where we compare its performance with that of our proposed strategy.

Overall, the main difference between our approach and the existing ones lies in our strategy to mitigate the interference. We do not rely neither on a geographic based criterion, as in~\cite{modpowcon}, which is easy to implement in a distributed fashion but is static and often over-restrictive, nor on a centralized optimization problem, as in~\cite{intraclu, comundcell,resouopt,groupart}, which achieves optimal solutions, but needs full channel state information over all the involved channels.

Conversely, we create a distributed situational awareness by proper observations of some channel parameters, and exploit it through our analytical results to choose when and how a D2D connection can be established. We have already investigated the concept of situational awareness in a multi-cell scenario in~\cite{ourglobe}. In that work, however, context awareness is based on statistical information, and decisions are taken based on the output of properly designed Bayesian Networks, without seeking to find the optimal solution.

Several other strategies have appeared in the literature to permit D2D communications through an underlay approach. Authors in~\cite{coopintcan} defined a scheme to forward interference by means of D2D communications, in order to make it easier to apply interference cancellation schemes; similarly, in~\cite{relimpro} the BS relays the D2D communications to allow interference cancellation at the receiving nodes.
In \cite{groupart}, graph theory is used to divide mobiles into subgroups, and a throughput maximization is attained by employing multiuser detection and solving an iterative optimization algorithm. In~\cite{contract} contract theory is leveraged to study the incentives to be granted to potential D2D users.
Finally, in~\cite{resousha}, various sharing schemes, both orthogonal and non-orthogonal, are investigated in a Manhattan grid topology based on the solution of a sum-rate optimization problem.

\section{Optimal Transmission Policy in a Multi-Tier Network}
\label{sec:tx_policy_1}

We first describe our framework in the most general scenario, in which there is one licensed owner of the spectrum resource, $U$, transmitting to $B$, and one user $S$ that is attempting a transmission towards $D$ over the same spectrum. Both users are backlogged, the time is slotted ($t\in\mathbb{N}$), and a transmission can be performed in a single time slot.

The transmissions from $S$ at time $t$ has a probability of success $p_t$, which depends on the channel conditions and the interference level at the destination. The licensed owner of the spectrum, $U$, allows the transmissions from $S$ only if it does not create a harmful interference. Indeed, $U$ can measure the level of interference from $S$ and, if this level is above a certain threshold, $U$ can block $S$ by denying its access to the resources for the subsequent $W$ time slots. The probability of blocking the transmissions of $S$ at time $t$ is denoted by $q_t$.

Assuming $p_t$ and $q_t$ are known, the secondary user $S$ can decide at each time slot $t$ if to transmit a new data packet, or to defer transmission to avoid the risk of a blockage. 
In the following, we derive an optimal transmission policy for $S$, with the strategy to be used in order to maximize its overall throughput. In Sec.~\ref{sec:D2Dpract_sinpow}, we will specify how $S$ can calculate  the values of $p_t$ and $q_t$ from the available local information.

\subsection{Modeling the Transmissions as a Stochastic Process}
Both $p_t$ and $q_t$ change as a function of time, depending on the channel conditions, so they can be modeled as two stochastic processes $\{p\} = \{P_t, t\in\mathbb{N}\}$ and $\{q\} = \{Q_t, t\in\mathbb{N}\}$, which may also be correlated. At a specific time step, the values of $p_t$ and $q_t$ are given by the realizations of the random variables $P_t$ and $Q_t$, respectively.
Let us start by assuming that these two processes are Markovian and stationary, so we can express the joint probability as $\phi_{P,Q}(p_t, q_t |p_{t-1},q_{t-1})$. 
We can analyze the strategy of the user $S$ with a Markov decision process (MDP).

\subsection{The MDP model}
The set of actions allowed for $S$ is $\mathcal{A} = \{T,H\}$, i.e., it can transmit ($T$) a packet or halt ($H$) the transmissions to avoid a possible blockage, depending on the state of the system. 

The state depends on the probability of success $p_t$, the probability of being silenced $q_t$, and the current blockage $\Lambda_t\in\{0, 1, 2, \ldots, W\}$, where $\Lambda_t = 0$ means that $S$ is free to transmit, while $\Lambda_t = i>0$ means that $S$ has been blocked, and must remain silent for this time slot, and for the following $W-i$ ones. We assume that $p_t$ and $q_t$ can assume only discrete values, with $p_t\in\mathcal{P}$ and $q_t\in\mathcal{Q}$, and $\mathcal{P},\mathcal{Q}\subset[0,1]$ (this assumption will be relaxed in the following). Thus, the state can be defined as the triplet $\bm s_t = (\Lambda_t, p_t, q_t)$. 

The transition probabilities $\mathbb{P}\left[ \bm s_{t+1}|\bm s_t\right]$ can be calculated case by case. $S$ can halt the transmission by choice (if $\Lambda_t=0$), or it can be forced to do it (if $\Lambda_t>0$). In both cases
\begin{equation}
 \mathbb{P}_{a=H}\left[ (\Lambda_{t+1}, p_{t+1}, q_{t+1})|(\Lambda_t, p_t, q_t)\right] =
\phi_{P,Q}(p_{t+1}, q_{t+1}|p_t,q_t)  \; ,
\label{tran_Ho}
\end{equation}
where $\Lambda_{t+1}=0$ if $\Lambda_t=0$ (if not in a blockage period), $\Lambda_{t+1}=\Lambda_t +1$ if $1 \leq \Lambda_t < W$ (if $S$ was blocked), or  $\Lambda_{t+1}=0$ if $\Lambda_t=W$ (if it was the last time slot of the blockage).

$S$ can decide to transmit only if $\Lambda_t=0$. In this case, a blockage can be triggered
\begin{equation}
 \mathbb{P}_{a=T}\left[(\Lambda_{t+1}, p_{t+1}, q_{t+1})|(\Lambda_t=0, p_t, q_t)\right] =
 \begin{cases}
  (1-q_t)\phi_{P,Q}(p_{t+1}, q_{t+1}|p_t,q_t) & \text{if } \Lambda_{t+1} = 0 \\
  q_t\phi_{P,Q}(p_{t+1}, q_{t+1}|p_t,q_t) & \text{if } \Lambda_{t+1} = 1 \; .\\
 \end{cases}
 \label{tran_Tx}
\end{equation}

In this MDP model, a reward $\mathbb{R}_a(s_t)$ depends only on the current state and the action taken. We define the reward to be equal to $1$ if a packet is delivered, $0$ otherwise. If $a = T$ we have 
$\mathbb{R}_{a=T}\left[(\Lambda_{t}=0, p_t, q_t)\right] = p_t $, 
while $\mathbb{R}_{a=H}\left[(\Lambda_{t}, p_t, q_t))\right] = 0$ otherwise. 

In the MDP model, the optimal policy $\bm{\sigma}^*$ selects one action for each state ($\sigma(\bm s)\in\mathcal{A}$) in order to maximize the expected total reward over an infinite horizon. In order to calculate $\bm{\sigma}^*$, we should initialize the vector of values $\bm{V}$, which contains the expected reward starting from each possible state. A possible choice for $\bm{V}^0$ is the null vector. 
Then, we should define also an initial policy $\bm{\sigma}^0$. The length of both vectors is equal to the number of possible states $N_s = |\mathcal{P}||\mathcal{Q}|(W+1)$.

We can find the optimal policy $\bm{\sigma}^*$ with dynamic programming~\cite{bertsy} by updating the values of $\bm{\sigma}$ and $\bm{V}$ for every state $\bm s$, starting from $\bm{\sigma}^0$ and $\bm{V}^0$, as follows:
\be
 \sigma^{k+1}[\bm s]  =  \arg\max_a\left\{\sum_{\bm s'}\mathbb{P}_a\left[\bm s'|\bm s\right]\left(\mathbb{R}_a\left[\bm s\right] + \gamma V^k[\bm s']\right)\right\} 
 \label{updapi} 
 \ee
 \be
 V^{k+1}[\bm s]  =  \sum_{\bm s'}\mathbb{P}_{\sigma^{k+1}[\bm s]}[\bm s'|\bm s]\left(\mathbb{R}_{\sigma^{k+1}[\bm s]}\left[\bm s\right] + \gamma V^k[\bm s']\right) \; ,
 \label{updaV}
\ee
where  $\gamma\in(0,1)$ is the discount factor, and $k$ represents the iteration step. The iteration finishes with convergence, i.e., when all elements of $\bm{V}^{k+1} - \bm{V}^{k}$ have an absolute value below $\epsilon \ll 1$.

\subsection{Time uncorrelation}

In a real system, we often have negligible time correlation for the channel fading coefficients, e.g., if $S$ is transmitting only in a subset of non-consecutive time slots. In this case, $\phi_{P,Q}(p_{t+1}, q_{t+1})$ does not depend on the current values of $p$ and $q$, and the MDP model can be further simplified.

This simplification allows us to partition the set of states $\mathcal{S}$ into two disjoint subsets: $\mathcal{S}_F = \{\bm s_t \in\mathcal{S}:\Lambda_t = 0\}$, so we can write $\bm s_t =(p_t,q_t)$, and $\mathcal{S}_B = \{\bm s\in\mathcal{S}:\Lambda_t>0\}$, where $\bm s_t = (\Lambda_t)$. The last simplification is possible since no transmission is allowed if $\Lambda_t>0$, and the future values of $p_{t+1}$ and $q_{t+1}$ do not depend on their current values. In particular, if $1 \leq \Lambda_t < W$, then the next state is $\bm s_{t+1} = (\Lambda_t+1)$ with probability 1. The number of states becomes $N_s= |\mathcal{P}||\mathcal{Q}| + W$.

Fig.~\ref{diag:exaMDP} depicts an example of the MDP model with 4 discrete values for $p$ and $q$, i.e.,  $\mathcal{P} = \{p^{(1)},p^{(2)},p^{(3)},p^{(4)}\}$ and $\mathcal{Q} = \{q^{(1)},q^{(2)},q^{(3)},q^{(4)}\}$, and $W=3$. The initial state is $\bm s = (\Lambda = 0, p = p^{(2)}, q = q^{(4)})$, a solid line indicates a transition without transmissions, while a dashed line indicates a transition with a transmission attempt. When the system is in state $\bm s = (\Lambda = 0, p = p^{(4)}, q = q^{(3)})$, a transmission is attempted and a blockage is triggered, bringing the system to the blockage state $\bm s = (\Lambda = 1)$.

With these assumptions, it is possible to derive analytically the optimal decision policy $\bm \sigma^*$ as follows.
When the algorithm has converged ($\bm{V}^{k+1} = \bm{V}^k = \bm{V}$), the optimal action $\sigma[\bm s]$ is known for every state. This action can be to transmit ($a=T$) or not ($a=H$) for any state $\bm s\in\mathcal{S}_F$, whereas it must be $a= H$ for the states in $\mathcal{S}_B$.

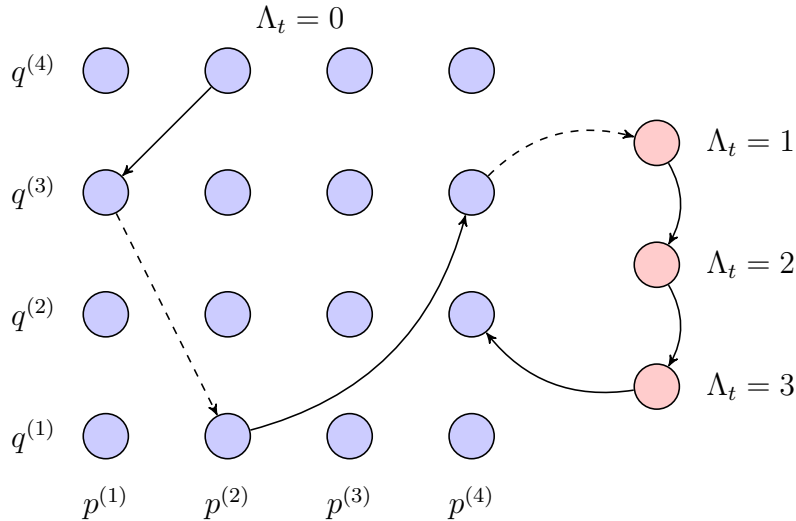
\begin{figure}
 \begin{center}
  \begin{tikzpicture}[>=stealth',semithick,auto]	 
  \tikzset{mypoint/.style = {draw, shape=circle, text centered, align=center, minimum size=0.6cm, fill=blue!20},
	   myblo/.style = {draw, shape=circle, text centered, align=center, minimum size=0.6cm, fill=red!20},
	   mytext/.style = {text centered, align=center, minimum size=0.6cm}
		        }
		        
    \node [mypoint] (p1q4) {};
    \node [mypoint, right=1cm of p1q4] (p2q4) {};
    \node [mypoint, right=1cm of p2q4] (p3q4) {};
    \node [mypoint, right=1cm of p3q4] (p4q4) {};
    
    \node [mypoint, below=1cm of p1q4] (p1q3) {};
    \node [mypoint, right=1cm of p1q3] (p2q3) {};
    \node [mypoint, right=1cm of p2q3] (p3q3) {};
    \node [mypoint, right=1cm of p3q3] (p4q3) {};
    
    \node [mypoint, below=1cm of p1q3] (p1q2) {};
    \node [mypoint, right=1cm of p1q2] (p2q2) {};
    \node [mypoint, right=1cm of p2q2] (p3q2) {};
    \node [mypoint, right=1cm of p3q2] (p4q2) {};
    
    \node [mypoint, below=1cm of p1q2] (p1q1) {};
    \node [mypoint, right=1cm of p1q1] (p2q1) {};
    \node [mypoint, right=1cm of p2q1] (p3q1) {};
    \node [mypoint, right=1cm of p3q1] (p4q1) {};
    
    \node [ , below=0.5cm of p4q4] (tmp) {};
    \node [myblo, right=2cm of tmp] (b1) {};
    \node [myblo, below=1cm of b1] (b2) {};
    \node [myblo, below=1cm of b2] (b3) {};
    
     \draw[->] (p2q4) to node[auto](step1){} (p1q3);
     \draw[->, dashed] (p1q3) to node(step2){} (p2q1);
     \draw[->] (p2q1) to [bend right = 30] node[auto](step3){} (p4q3);
     \draw[->, dashed] (p4q3) to [bend left = 30] node(step4){} (b1);
     \draw[->] (b1) to [bend left = 30] node(step5){} (b2);
     \draw[->] (b2) to [bend left = 30] node(step6){} (b3);
     \draw[->] (b3) to [bend left = 30] node(step5){} (p4q2);
     
    \node [mytext, right=0.2cm of b1] (nb1) {$\Lambda_t = 1$};
    \node [mytext, right=0.2cm of b2] (nb2) {$\Lambda_t = 2$};
    \node [mytext, right=0.2cm of b3] (nb3) {$\Lambda_t = 3$};
    
    \node [mytext, below=0.2cm of p1q1] (np1) {$p^{(1)}$};
    \node [mytext, below=0.2cm of p2q1] (np2) {$p^{(2)}$};
    \node [mytext, below=0.2cm of p3q1] (np3) {$p^{(3)}$};
    \node [mytext, below=0.2cm of p4q1] (np4) {$p^{(4)}$};
    
    \node [mytext, left=0.2cm of p1q1] (nq1) {$q^{(1)}$};
    \node [mytext, left=0.2cm of p1q2] (nq2) {$q^{(2)}$};
    \node [mytext, left=0.2cm of p1q3] (nq3) {$q^{(3)}$};
    \node [mytext, left=0.2cm of p1q4] (nq4) {$q^{(4)}$};
    
    \node [ , right=0.5cm of p2q4] (tmp2) {};
    \node [mytext, above=0.2cm of tmp2] (lam0) {$\Lambda_t = 0$};
    
    \end{tikzpicture}
    \caption{Evolution of the MDP when $\{p\}$ and $\{q\}$ are stationary and non time correlated. The states on the left are those with $\Lambda_t = 0$, each one identified by the value of $P_t$ and $Q_t$. Dashed arrows correspond to action $T$, while solid arrows to action $H$. Here it is assumed that $|\mathcal{P}| = |\mathcal{Q}| = 4$ and $W=3$.}
    \label{diag:exaMDP}
    \vspace{-1cm}
  \end{center}
\end{figure}

Let us consider first the iteration from those states $\bm s \in \mathcal{S}_F$ for which $\sigma[\bm s] = H$. We have that $\Lambda_{t+1} = 0$, and the reward is $0$. For these states we can write 
\be  
V[\bm s] = \gamma\sum_{\bm s'\in{\mathcal{S}_F}}\mathbb{P}_H[\bm s'|\bm s]V[\bm s'] \\
=  \gamma\sum_{p'\in\mathcal{P},q'\in\mathcal{Q}}\phi_{P,Q}(p', q')V[(p', q')]
\label{valfin_HS}
\ee
where we used \eq{tran_Ho} in the time uncorrelated case. For those states $\bm s \in \mathcal{S}_F$ for which $\sigma[\bm s] = T$ we also have an expected reward, as well as a probability of being blocked
\begin{eqnarray}
 \bm{V}[\bm s] & = & \sum_{\bm s'\in{\mathcal{S}}}\mathbb{P}_T[\bm s'|\bm s]\left(p + \gamma V[\bm s']\right) \nonumber\\
 & = & p + \gamma\sum_{\bm s'\in{\mathcal{S}}}\mathbb{P}_T[\bm s'|\bm s]V[\bm s'] \nonumber\\
 & = & p + \gamma\sum_{\bm s'\in{\mathcal{S}_F}}\mathbb{P}_T[\bm s'|\bm s]V[\bm s'] + \gamma\sum_{\bm s'\in{\mathcal{S}_B}}\mathbb{P}_T[\bm s'|\bm s]V[\bm s'] \nonumber\\
 & = & p + \gamma(1-q)\sum_{p'\in\mathcal{P},q'\in\mathcal{Q}}\phi_{P,Q}(p', q')V[(p', q')] + \gamma qV[(\Lambda = 1)] \; .
 \label{valfin_TS}
\end{eqnarray}

The update equations for those states $\bm s \in \mathcal{S}_B$ can be calculated by observing that from state  $\bm s = (\Lambda=i)$, with $1 \leq i \leq W $, the only possible action is $a=H$, thus no reward is involved. These equations can hence be written as
\begin{equation}
 V[(\Lambda=i)] =
 \begin{cases}
  \gamma V[(\Lambda=i+1)] & \mbox{if } i < W \\
  \gamma\sum_{p'\in\mathcal{P},q'\in\mathcal{Q}}\phi_{P,Q}(p', q')V[(p', q')] & \mbox{if } i = W \; ,
 \end{cases}
\end{equation}
since the case of $i=W$ corresponds to the end of the blockage. The update from all the states in $\mathcal{S}_B$ is deterministic, so we can rewrite this equation as 
\begin{equation}
 V[(\Lambda=1)] = \gamma^W\sum_{p'\in\mathcal{P},q'\in\mathcal{Q}}\phi_{P,Q}(p', q')V[( p', q')] \; ,
\end{equation}
and combine this results with \eq{valfin_HS} and \eq{valfin_TS}, obtaining
\begin{equation}
 V[(0,p,q)] =
 \begin{cases}
  \displaystyle
  p + \gamma(1-q+\gamma^Wq)\sum_{p'\in\mathcal{P},q'\in\mathcal{Q}}\phi_{P,Q}(p', q')V[( p', q')] & \mbox{if } \sigma[(p,q)] = T\\
  \displaystyle
  \gamma\sum_{p'\in\mathcal{P},q'\in\mathcal{Q}}\phi_{P,Q}(p', q')V[(p', q')] & \mbox{if } \sigma[(p,q)] = H \; .
 \end{cases}
 \label{sol_disc}
\end{equation}

Once the convergence is reached, the final value of $\bm{\sigma}$ is the optimal strategy, which grants the highest reward. \eq{sol_disc} can be rewritten as
\be
 V[(p,q)] = \max\left(p + \gamma(1-q+\gamma^Wq)C_d, \gamma C_d\right) \; ,
 \label{sol_disc_max}
\ee
where we define $C_d$ as the value of the summation over $p'$ and $q'$. We notice that if
we increase the cardinality of both $\mathcal{P}$ and $\mathcal{Q}$ to the limit for $|\mathcal{P}|, |\mathcal{Q}|\rightarrow\infty$, we obtain the continuous case. The PMF $\phi_{P,Q}(p,q)$ becomes a probability density function (PDF) defined over $[0,1]^2$.
$V[(p, q)]$ becomes a function of two continuous variables, $v(p,q)$, which is defined for $0\leq p,q\leq 1$. The vector of optimal actions $\bm \sigma^*$ can be substituted by a binary function $\mu(p,q)$, with $\mu(p,q)=1$ if the optimal action is $T$, and $\mu(p,q)=0$ otherwise. Replacing summations with integrals, we get
\begin{equation}
 v(p,q) =
 \begin{cases}
  \displaystyle
  p + \gamma(1-q+\gamma^Wq)\int_0^1\int_0^1\phi_{P,Q}(p', q')v(p', q')\de p'\de q' & \text{if } \mu(p,q) = 1\\
  \displaystyle
  \gamma\int_0^1\int_0^1\phi_{P,Q}(p', q')v(p', q')\de p'\de q' & \text{if } \mu(p,q) = 0 \; .
 \end{cases}
 \label{sol_cont}
\end{equation}
Similarly to the discrete case, the double integral is a constant $C>0$, since both $\phi_{P,Q}(p,q)$ and $v(p,q)$ are non negative functions:
\begin{equation}
 C = \int_0^1\int_0^1\phi_{P,Q}(p', q')v(p', q')\de p'\de q' \; .
 \label{defC}
\end{equation}

In order to solve \eq{sol_cont} and find the optimal action policy $\mu(p,q)$, we introduce the following lemma.
\begin{lem}
 The optimal action function $\mu(p,q)$ is non decreasing in $p$ and non increasing in $q$
 \begin{eqnarray}
  \mu(p, \bar{q}) & \leq & \mu(p+h, \bar{q}), \quad \forall  h>0, \forall \bar{q}\in[0,1]\label{condp}\\
  \mu(\bar{p}, q) & \geq & \mu(\bar{p}, q+h), \quad \forall  h>0, \forall \bar{p}\in[0,1] \; .
  \label{condq}
 \end{eqnarray}
 \label{lem:mu}
\end{lem}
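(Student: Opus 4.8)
The plan is to read the decision rule straight off the fixed-point equation \eq{sol_cont}. At convergence the value function equals the pointwise maximum of its two branches, i.e. the continuous analogue of \eq{sol_disc_max}, so $\mu(p,q)=1$ exactly when the ``transmit'' branch is at least as large as the ``halt'' branch. The feature that makes this tractable is that the double integral $C$ in \eq{defC} is a single scalar that does \emph{not} depend on the point $(p,q)$ at which the action is chosen. Hence the optimal action at $(p,q)$ is governed by the sign of the difference
\be
  \Delta(p,q) \;=\; \Big[p+\gamma(1-q+\gamma^W q)\,C\Big]-\gamma C ,
\ee
and I would set $\mu(p,q)=\one\{\Delta(p,q)\ge 0\}$, with ties broken towards $T$ (the choice is immaterial for what follows).

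First I would simplify $\Delta$. The two copies of $\gamma C$ partially cancel and the $q$-dependent terms collapse to $-\gamma C(1-\gamma^W)q$, leaving the affine expression $\Delta(p,q)=p-\kappa q$ with slope $\kappa:=\gamma C(1-\gamma^W)$. I would then fix the sign of $\kappa$: since $\gamma\in(0,1)$ and $C>0$ (both established above, the latter because $\phi_{P,Q}$ and $v$ are nonnegative), and since $1-\gamma^W\ge 0$, we have $\kappa\ge 0$. The optimal policy therefore has the explicit threshold form
\be
  \mu(p,q)=\one\{\,p\ge \kappa q\,\},\qquad \kappa=\gamma C(1-\gamma^W)\ge 0 .
\ee

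The two monotonicity statements then drop out of this one-line characterization. For \eq{condp} I fix $\bar q$ and note that $\{p:p\ge\kappa\bar q\}$ is an upper half-line, so increasing $p$ can only turn the indicator from $0$ to $1$, never the reverse; thus $\mu(p,\bar q)\le\mu(p+h,\bar q)$. For \eq{condq} I fix $\bar p$ and observe that, because $\kappa\ge 0$, the map $q\mapsto\kappa q$ is nondecreasing, so increasing $q$ can only break the inequality $p\ge\kappa q$; hence $\mu(\bar p,q)\ge\mu(\bar p,q+h)$. I do not expect a genuine obstacle here; the main subtlety is simply recognizing and justifying that $C$ is a true constant in $(p,q)$, which reduces the comparison to two affine functions sharing the same $C$. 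The remaining care points are pinning down $\kappa\ge 0$ and noting the degenerate case $W=0$ (or $C=0$), where $\kappa=0$ forces $\mu\equiv 1$, trivially non-decreasing in $p$ and non-increasing in $q$.
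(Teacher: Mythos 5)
Your proof is correct, and it rests on exactly the two facts that power the paper's own argument: the double integral $C$ in \eq{defC} is a scalar that does not depend on the state $(p,q)$ at which the action is chosen, and the coefficient $\gamma C(1-\gamma^W)$ is nonnegative. The route, however, is genuinely different. The paper (Appendix~\ref{app:proof1}) argues by contradiction: assuming $\mu(\bar p,q)=0$, i.e.\ $\gamma C > \bar p + \gamma(1-q+\gamma^W q)C$, and $\mu(\bar p,q+h)=1$, it observes that the transmit branch at $(\bar p,q+h)$ differs from the one at $(\bar p,q)$ by exactly $-\gamma hC(1-\gamma^W)\le 0$, which yields contradictory inequalities; it never writes down the decision boundary. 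You instead solve the branch comparison in closed form, obtaining the explicit policy $\mu(p,q)=\one\{p\ge \kappa q\}$ with $\kappa=\gamma C(1-\gamma^W)$, and read both monotonicity claims off the indicator. The notable consequence is that your argument proves strictly more than the lemma: the condition $p\ge\kappa q$ is precisely the linear threshold $q\le kp$ with $k=(\gamma C(1-\gamma^W))^{-1}$ asserted by Theorem~\ref{teo:optsol}, so in your formulation the lemma and the theorem collapse into a single step, and the paper's subsequent continuity argument (equating the two branches along $q=g(p)$ to deduce \eq{valk}) becomes redundant. What the paper's weaker, contradiction-based lemma buys in exchange is robustness: its logical skeleton uses only that the transmit branch is increasing in $p$ and decreasing in $q$ while the halt branch is constant, so it would survive a reward model in which the branch comparison cannot be solved explicitly. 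Your tie-breaking remark (assigning $\Delta=0$ to $T$, and noting the choice is immaterial) correctly handles the one point the paper glosses over, namely the ambiguity of $\mu$ on the boundary, and your treatment of the degenerate cases $\kappa=0$ is consistent with both statements.
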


The proof is detailed in Appendix \ref{app:proof1}. We can identify the optimal transmission strategy $\mu(p,q)$, according to the following Theorem.
\begin{teo}
 There exists a linear function $g(p) = kp$ such that the optimal transmission strategy $\mu(p,q)$ is to transmit whenever the blockage probability $q$ is lower than $g(p)$, and to defer transmission otherwise.
 \label{teo:optsol}
\end{teo}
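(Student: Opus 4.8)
The plan is to read off the optimal rule $\mu(p,q)$ directly from the max-form of the value function in \eq{sol_cont}, and then to simplify the resulting branch comparison into a linear inequality in the pair $(p,q)$. By definition, the optimal action is to transmit exactly when the ``transmit'' branch of \eq{sol_cont} is at least as large as the ``halt'' branch, that is,
\[
p + \gamma\,(1-q+\gamma^W q)\,C \;\ge\; \gamma\,C,
\]
with $C$ the positive constant of \eq{defC}. The decisive structural observation I would emphasize is that the \emph{same} constant $C$ enters both branches, since it is a global integral over $[0,1]^2$ and is therefore independent of the particular state $(p,q)$ at which the comparison is evaluated.

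Granting this, the argument reduces to elementary algebra. Cancelling the common term $\gamma C$ and collecting the terms proportional to $q$ yields $p \ge \gamma C\,(1-\gamma^W)\,q$, so the transmit region is precisely
\[
q \;\le\; \frac{p}{\gamma\,C\,(1-\gamma^W)} \;=\; k\,p, \qquad k := \frac{1}{\gamma\,C\,(1-\gamma^W)}.
\]
This is a half-plane bounded by the line $g(p)=kp$ through the origin, which is exactly the claimed form. It then remains to verify that $k>0$: since $\gamma\in(0,1)$ we have $\gamma>0$ and $0<\gamma^W<1$, hence $1-\gamma^W>0$, and $C>0$ has already been established; therefore $g$ is a genuine strictly increasing linear function, and the equivalence ``$\mu(p,q)=1$ iff $q<g(p)$'' holds away from the measure-zero boundary line.

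The step I expect to demand the most care is the logical status of $C$. Because $C$ is defined through $v$, and $v$ in turn depends on the policy $\mu$, the statement is really one about the converged fixed point of the value iteration \eq{updapi}--\eq{updaV}: once convergence is reached, $C$ is a single fixed number and the boundary is unambiguously the line $kp$. Here Lemma~\ref{lem:mu} plays a supporting role, confirming the consistency of the picture---its monotonicity of $\mu$ (non-decreasing in $p$, non-increasing in $q$) forces the transmit set to be a ``lower-right'' region of the unit square, which matches the orientation of the half-plane $q\le kp$ and rules out any disconnected or misoriented acceptance region. The only remaining routine point is tie-breaking on the boundary $q=g(p)$, where the two branches coincide; being a set of measure zero, it may be assigned to either action without affecting $v$ or $C$.
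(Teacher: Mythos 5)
Your proof is correct, but it takes a genuinely more direct route than the paper's. The paper first invokes Lemma~\ref{lem:mu} to establish the existence of a monotone boundary function $g(p)$ separating the transmit and defer regions, then argues that $v(p,q)$ is continuous via the max form \eq{formax}, and finally pins down $g$ by equating the two branches of \eq{sol_cont2} on the boundary $q=g(p)$ and solving, which yields $g(p)=kp$. You instead apply the Bellman optimality comparison pointwise: since time uncorrelation makes the continuation value a single state-independent constant $C$ in both branches of \eq{sol_cont}, the transmit condition $p + \gamma(1-q+\gamma^W q)C \ge \gamma C$ reduces by elementary algebra to $q \le kp$ with $k=\left(\gamma C(1-\gamma^W)\right)^{-1}$, delivering existence, linearity, and the slope in one step; Lemma~\ref{lem:mu} and the continuity argument become logically unnecessary, and you correctly demote the lemma to a consistency check on the orientation of the transmit region. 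What your route buys is brevity and the insight that the theorem is the Bellman greedy rule in disguise; what the paper's structure buys is robustness of method --- the pattern of ``monotone boundary, plus continuity, plus boundary matching'' would survive in settings where the two branches do not admit a state-independent closed-form comparison (for instance, with temporal correlation of $\{p\}$ and $\{q\}$, where the continuation value depends on the current state and the one-line algebra is unavailable). Your treatment of the two delicate points --- that $C$ is only well defined at the converged fixed point, and that the boundary $q=g(p)$ is a measure-zero set on which tie-breaking is immaterial --- is sound.
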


\begin{proof}
From (\ref{condq}) of Lemma \ref{lem:mu}, we can state that there exists a function $g(p)$ defined for $0\leq p\leq 1$, such that $\mu(p,q) = 0\, , \, \forall(p,q):q>g(p)$ and that $\mu(p,q) = 1\,, \, \forall(p,q):q<g(p)$. This function splits the square $[0,1]^2$ into two regions: a lower region, where the optimal strategy is to transmit, and an upper region, where the optimal strategy is to defer transmission.
Correspondingly, we can rewrite (\ref{sol_cont}) as
\begin{equation}
 v(p,q) =
 \begin{cases}
  \displaystyle
  p + \gamma(1-q+\gamma^Wq)\int_0^1\int_0^1\phi_{P,Q}(p', q')v(p', q')\de p'\de q' & \text{if } q < g(p)\\
  \displaystyle
  \gamma\int_0^1\int_0^1\phi_{P,Q}(p', q')v(p', q')\de p'\de q' & \text{if } q > g(p) \; .
 \end{cases}
 \label{sol_cont2}
\end{equation}

Note that, from (\ref{condp}), we also infer that $g(p)$ is a non decreasing function of $p$.

The second point we need to highlight is that $v(p,q)$ is a continuous function over all its domain. This is easy to observe if we rewrite (\ref{sol_disc_max}) for the continuous case as
\begin{equation}
 v(p,q) = \max\left(p + \gamma(1-q+\gamma^Wq)C, \gamma C\right) \; ,
 \label{formax}
\end{equation}
Since both $\gamma C$ and $p+\gamma(1-q+\gamma^Wq)C$ are continuous function in $\mathbb{R}^2$, the same holds for their maximum.
With this in mind, we can immediately state that the two branches of (\ref{sol_cont2}) have the same value when $q = g(p)$, and therefore:

\begin{equation}
 p + \gamma(1-g(p)+\gamma^Wg(p))C = \gamma C, \quad \forall p\in[0, 1] \; ,
\end{equation}
which, after some algebraic manipulations, yields
\begin{equation}
 g(p) = \frac{p}{\gamma C(1-\gamma^W)} = kp \; .
 \label{valk}
\end{equation}

We have thus shown that $g(p)$ is a linear function, whose slope $k$ is equal to $(\gamma C(1-\gamma^W))^{-1}$.
\end{proof}

In other words, Thm.~\ref{teo:optsol} states that the admissible risk to be blocked grows linearly with the potential benefit of successfully delivering a data packet, with a slope $k$ depending on the blockage duration $W$.
Using the definition of $k$, we can rewrite the expected reward (\ref{formax}) as
\begin{equation}
 v(p,q) = \gamma C + \max\left(p-\frac{q}{k}, 0\right) \; ,
 \label{funrew}
\end{equation}
which is the fundamental reward equation. For each state $\bm s = (p,q)$, the expected reward is given by the discounted reward of the next time slot, $\gamma C$, plus the difference, if positive, between $p$ and $q/k$, where $p$ is the expected reward if a transmission is attempted. On the other side, $k^{-1} = \gamma C - \gamma^{W+1}C$ can be interpreted as the cost of being blocked, since it is the difference between $\gamma C$, the reward at the next time slot, and $\gamma^{W+1}C$, the reward after $W$ time slots (the length of the blockage). Hence, $q/k$ becomes the cost of being blocked multiplied by the blockage probability or, equivalently, the expected cost of a transmission. A transmission is therefore attempted only if its expected reward $p$ is greater than its expected cost $q/k$, as stated in Thm.~\ref{teo:optsol}.

In order to retrieve an expression for $k$, we start from \eq{funrew} by taking the expectation over $p$ and $q$ on both sides. Recalling that $C$ is by definition equal to $\mathbb{E}_{p,q}[v(p,q)]$, we get:
\begin{equation}
 C = \gamma C + \mathbb{E}_{p,q}\left[\left(p-\frac{q}{k}\right)\chi(q\leq kp)\right] \; ,
\end{equation}
where $\chi(\cdot)$ is the indicator function. Now, replacing $C$ with $(\gamma k(1-\gamma^W))^{-1}$ as per (\ref{valk}) yields, for~\footnote{For $k>1$, the integrals have slightly different expressions, but a similar result is obtained as well.} $k\leq1$
\begin{eqnarray}
 \frac{1-\gamma}{\gamma k(1-\gamma^W)} & = & \int_0^1\int_0^{kp}\left(p-\frac{q}{k}\right)\phi_{P,Q}(p,q)\de q\de p \nonumber \\
 \beta & = & k\int_0^1\int_0^{kp}p\phi_{P,Q}(p,q)\de q\de p - \int_0^1\int_0^{kp}q\phi_{P,Q}(p,q)\de q\de p \; ,
\end{eqnarray}
where
\begin{equation}
 \beta = \frac{1-\gamma}{\gamma(1-\gamma^W)} \; .
 \label{valbeta}
\end{equation}
The only unknown is now $k$, which depends on $\gamma$ and $W$ only through the term $\beta$. 

In the realistic assumption that the interference shows negligible spatial correlation~\cite{Haenggi}, $p$ and $q$ are also uncorrelated, and
in this case, we obtain
\begin{equation}
 \beta = k\int_0^1x\phi_P(x)\Phi_Q(kx)\de x - \int_0^1x\phi_Q(x)\left(1-\Phi_P\left(\frac{x}{k}\right)\right)\de x \; ,
 \label{newbeta}
\end{equation}
where $\Phi_P(p)$ and $\Phi_Q(q)$ are the cumulative distribution functions (cdf) of $p$ and $q$, respectively.

With further algebraic manipulations, for both cases with $k<1$ and $k>1$, we obtain
\begin{equation}
 \beta = \begin{cases}
          k\int_0^1\Phi_Q(kx)\left(1-\Phi_P(x)\right)\de x & \mbox{if } k\leq 1 \\
          \int_0^1\Phi_Q(x)\de x -1 + k\left(1-\int_0^1\Phi_Q(kx)\Phi_P(x)\de x\right) & \mbox{if } k > 1 \; .
         \end{cases}
         \label{solk}
\end{equation}

We observe that \eq{solk} is monotonically increasing with $k$ in its entire domain, so its inversion (to obtain the value of $k$) is possible. In general, this inversion can be done numerically. In some cases a closed form expression for $k$ can be derived, as when both $\{p\}$ and $\{q\}$ are uniformly distributed between 0 and 1. In this case, $\phi_P(p) = 1$ and $\phi_Q(q) = 1$ over all the domain of the function $v(p,q)$,
while $k=1$ gives $\beta = 1/6$. Therefore, inverting \eq{solk} yields
\begin{equation}
 k = \begin{cases}
      \sqrt{6\beta} & \mbox{if } \beta \leq\frac{1}{6} \\
      \beta+\frac{1}{2} + \sqrt{\left(\beta+\frac{1}{2}\right)^2-\frac{1}{3}} & \mbox{if } \beta \geq\frac{1}{6} \; .
     \end{cases}
     \label{sol_uni}
\end{equation}

\section{D2D communications in Cellular Networks: single power level}
\label{sec:D2Dpract_sinpow}
In this section, we apply the theoretical framework developed in Sec.~\ref{sec:tx_policy_1} to a cellular network scenario, in which a D2D communication between two mobile terminals is happening in parallel with a communication from a mobile terminal and the base station, using the same uplink resources.

The licensed user $U$ here is the mobile terminal transmitting to the base station $B$. At the same time, the mobile $S$ is attempting a D2D data transmission to another terminal $D$ in the same band (non-orthogonal transmission), thus potentially interfering with data reception at $B$.

The mobile terminal $S$ can access the channel without the need to coordinate with $B$, avoiding a long latency and allowing the communication even if $S$ is out of the coverage area of $B$. On the other side, $B$ keeps control of the uplink frequency by constantly monitoring its signal to interference and noise ratio (SINR) for the transmission coming from $U$. If the SINR falls under a given threshold, $B$ can block all the interfering communications from $S$, thus preserving the quality of the communication from $U$ for a certain time interval. From the point of view of $B$, there is no need to allocate resources for $S$, neither to schedule its transmissions.

A fixed transmission power level is assumed for $S$. In order to avoid being blocked, $S$ should make a binary choice at each time slot if to transmit or not.
This choice is based on the expected SINR at $D$, which allows the mobile terminal to predict the transmission reliability, and the expected SINR at $B$, which can be used to predict, in case of a transmission attempt, the probability of creating an excessive disturbance at $B$ and consequently being blocked.

The D2D source $S$ has only partial information to estimate these SINRs. Since $U$ is the licensed user, we assume that $S$ can obtain information only on the channels $U-B$ and $U-D$: the base station can forward the information about channel $U-B$ on a proper downlink channel, while the channel $U-D$ can be measured by $D$, and this information can be conveyed to $S$ through an orthogonal, out-of-band control channel.
We will now observe how the optimal strategy derived in Sec.~\ref{sec:tx_policy_1} can be implemented in this scenario.

\subsection{System Model}
We consider a single cell, centered at the BS $B$. At each time slot, a single licensed user is scheduled to transmit on each frequency band. We focus, as a starting point, on a given frequency band, which is assigned to the mobile $U$. We also assume that the idle user $S$ is willing to send data to another idle terminal $D$ via a direct D2D communication.

This scenario is depicted in Fig.~\ref{diag:scenario}, where we highlight also the distances $d_{SB}$, $d_{SD}$, $d_{UD}$ and $d_{UB}$, which correspond to the distances among $S$, $B$, $U$ and $D$. The time is slotted, with slot duration $T$, and slot synchronization is available throughout the network.

\begin{figure}
\label{fig:scenario}
 \begin{center}
  \begin{tikzpicture}[>=stealth',semithick,auto]	 
  \tikzset{basesta/.style = {draw, shape=diamond, text centered, align=center, minimum size=0.8cm, inner sep = 0pt, fill=blue!20},
	   licuser/.style = {draw, shape=circle, text centered, align=center, minimum size=0.6cm, inner sep = 0pt, fill=blue!20},
	   unlicuser/.style = {draw, shape=circle, text centered, align=center, minimum size=0.6cm, inner sep = 0pt},
	   myblo/.style = {draw, shape=circle, text centered, align=center, minimum size=0.5cm, fill=red!20},
	   mytext/.style = {text centered, align=center, minimum size=0.5cm}
		        }
    \node at (0, 0) [basesta](theBS) {\footnotesize $B$};
    \node at (-3, 3) [licuser](licUE) {\footnotesize $U$};
    \node at (-3.5, -0.5) [unlicuser](unS) {\footnotesize $S$};
    \node at (-5, 2.5) [unlicuser](unD) {\footnotesize $D$};
    
    \draw[->] (licUE) to node[auto](step1){$d_{UB}$} (theBS){};
    \draw[->] (unS) to node[auto](step1){$d_{SD}$} (unD){};
    \draw[->,dashed] (unS) to node[auto](step1){$d_{SB}$} (theBS);
    \draw[->,dashed] (licUE) to node[auto](step1){$d_{UD}$} (unD);
    
    \end{tikzpicture}
    \caption{Scenario: D2D communication in a cellular network, with one  licensed user $U$ transmitting to the BS $B$, and one additional user $S$ attempting a D2D transmission to $D$.}    
    \label{diag:scenario}
    \vspace{-1cm}
  \end{center}
\end{figure}
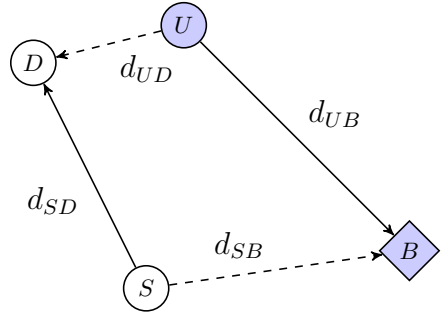

The licensed user $U$ transmits with power $P_U$, whereas $S$ sets its power to $P_S$. The wireless channel between $U$ and $B$ is modeled as a Rayleigh channel, so the SINR at $B$ is
\be
\text{SINR}_B(t) = \frac{A P_U | h_{UB}(t)|^2}{(N_0+I_B)d_{UB}^{\alpha}} \; ,
\label{eq:SINR_UB}
\ee
where $h_{UB}(t)$ is the fading coefficient, modeled as a complex Gaussian random variable with zero mean and unit variance, and assumed to be constant over the entire time slot. $A$ is a fixed path loss term, $N_0$ is the noise power, and $I_B$ is the interference power at $B$ coming from the other transmitters in the network. If a D2D transmission between $S$ and $D$ occurs, the interference term  can be written as $I_B = I_{ic} + I_{SB}$, where $I_{ic}$ is the inter-cell interference, while $I_{SB}$ is the interference due to the D2D transmissions from $S$.

The $\text{SINR}_D(t)$ of the D2D communication can be modeled analogously, by substituting in \eq{eq:SINR_UB} the source $U$ and the destination $B$ with $S$ and $D$, respectively, and $P_U$ with $P_S$. We notice that the interference experienced at $D$ can be written as $I_D = I_{ic} + I_{UD}$, where $I_{ic}$ is the inter-cell interference, while $I_{UD}$ is the interference due to the transmissions from $U$. 

For each transmission, the data packet is correctly received if the corresponding SINR is greater than a threshold $\theta$, according to a decoding threshold model. The probability of successfully receiving a packet from $S$ at $D$ can be written as
\begin{eqnarray}
 p & = & \mathbb{P}\left[SINR_D \geq \theta\right] \nonumber\\
 & = & \mathbb{P}\left[|h_{SD}|^2\geq\frac{(N_0+I_D)\theta d_{UD}^{\alpha}}{AP_S}\right] \nonumber\\
 & = & \exp\left(-\frac{(I_{ic}+N_0)\theta d_{SD}^{\alpha}}{A P_S}\right)\exp\left(-\frac{I_{UD}\theta d_{SD}^{\alpha}}{A P_S}\right)\nonumber \\
 & = & \exp \left( -\frac{\theta}{\gamma_{SD}}\right) \exp \left(-\frac{\theta}{R_D} | h_{UD} |^2\right ) \; ,
\end{eqnarray}
where $\gamma_{SD}$ is the SINR at $D$ without the interference from $U$, as if $I_{UD}=0$, while $R_D$ is simply the ratio between the average power (received by $D$) coming from $S$ and the one coming from $U$. Both $\gamma_{SD}$ and $R_D$ strongly influence the success probability at $D$.

The probability of blockage can also be calculated in a similar way. The cellular base station $B$ will block the D2D transmissions from the terminal $S$ if $\text{SINR}_B$ falls below the decoding threshold $\theta$. As in the case of $I_D$, also the interference experienced at $B$ can be written as sum of two terms $I_B= I_{ic}+I_{SB}$.
The probability of blockage can be expressed as

\begin{eqnarray}
 q & = & \mathbb{P}\left[\text{SINR}_B < \theta\right] \nonumber\\
 & = & \mathbb{P}\left[ AP_U d_{UB}^{-\alpha}|h_{UB}|^2 < \theta (N_0+I_{ic}) + A P_S \theta d_{SB}^{-\alpha} | h_{SB}|^2\right] \nonumber \\
 & = & \exp\left(\frac{N_0+I_{ic}}{A P_S d_{SB}^{-\alpha}} - \frac{P_U d_{UB}^{-\alpha}}{P_S d_{SB}^{-\alpha}}\frac{| h_{UB}|^2}{\theta}\right) \nonumber \\
 & = & \exp\left(\frac{1}{\gamma_{SB}}\right)\exp\left(-\frac{|h_{UB}|^2}{R_B\theta}\right) \; ,
\end{eqnarray}
where $\gamma_{SB}$ is the SINR at $B$ without the interference from $S$, while $R_B$ is the ratio between the average power (received by $B$) coming from $S$ and the one coming from $U$. As in the previous case, $\gamma_{SB}$ and $R_B$ strongly influence the probability of blockage.

We observe that both $p$ and $q$ depend on the values of time varying channels, so they can be described as two independent random processes. At each time slot, they can be represented as two random variables $P$ and $Q$, as a function of the values of 
$\gamma_{UD}$ and $R_D$ (for $p$), and $\gamma_{SB}$ and $R_B$ (for $q$).

The CDF of the success probability $p$ is
\begin{eqnarray}
 \Phi_P(x) & = & \mathbb{P}\left[p\leq x\right]\nonumber \\
 & = & \mathbb{P}\left[-\frac{\theta}{\gamma_{SD}} - \frac{\theta}{R_D} |h_{UD}|^2\leq \ln(x)\right]\nonumber \\
 & = & \mathbb{P}\left[|h_{UD}|^2 \geq -\frac{R_D}{\gamma_{SD}}-\frac{R_D\ln(x)}{\theta}\right] \nonumber \\
 & = & \left\{\begin{array}{ll}
        e^{R_D / \gamma_{SD}} x^{R_D / \theta} & \mbox{if } x \leq e^{-\theta / \gamma_{SD}} \\
        1 & \mbox{if } x > e^{-\theta / \gamma_{SD}}
       \end{array}\right. \; ,
       \label{psuccdist}
\end{eqnarray}
where $\exp(-\theta/\gamma_{SD})$ is the decoding probability with no intra-cell interference. 

Similarly, the CDF of the probability of blockage can be computed as
\begin{eqnarray}
 \Phi_Q(x) & = & \mathbb{P}\left[q < x\right] \nonumber \\
 & = & \mathbb{P}\left[\frac{1}{\gamma_{SB}} - \frac{|h_{UB}|^2}{R_B\theta}\leq \ln(x)\right] \nonumber \\
 & = & \mathbb{P}\left[|h_{UB}|^2 \geq \frac{R_B\theta}{\gamma_{SB}} - R_B\theta\ln(x)\right] \nonumber \\
 & = & e^{-\frac{R_B\theta}{\gamma_{SB}}}x^{R_B\theta} \; .
 \label{ppundist}
\end{eqnarray}

The expressions of these CDFs can be plugged into \eq{solk} in order to calculate the optimal value for $k$, which will give the optimal transmission strategy for $S$ according to Thm.~\ref{teo:optsol}.
In this specific case, we can invert \eq{solk} with a closed form when $k$ is in the interval $[0, e^{\theta/\gamma_{SD}}]$, which corresponds to $\beta \leq \beta_{\ell}$, where
\begin{equation}
 \beta_{\ell} = \beta(k = e^{\theta/\gamma_{SD}}) = \frac{R_D e^{-\theta / \gamma_{UB}}}{\theta(1+R_B\theta)(1+R_B\theta+R_D/\theta)} \; .
 \label{betaell}
\end{equation}
In this interval, in fact, we can write
\begin{equation}
 k = \left(\beta / \beta_{\ell}\right)^{1/ (1+R_b\theta)}e^{\theta / \gamma_{SD}} \; .
 \label{d2d_valk} 
\end{equation}

In the interval $\beta > \beta_{\ell}$, the optimal value of $k$ must be retrieved numerically from 
\be
 \beta = ke^{-\theta/ \gamma_{SD}}(1-z_1) + k^{- R_D / \theta} e^{1 / \gamma_{UD}}(z_1-z_2) + \frac{e^{-\theta / \gamma_{UB}}}{1+R_b \theta} - 1
 \label{d2d_valknum}
\ee
with $\gamma_{UD}$ and $\gamma_{UB}$ defined analogously to $\gamma_{SD}$ and $\gamma_{SB}$, and
\begin{equation}
 z_1 = \frac{1}{1+R_D/ \theta}; \quad \quad z_2 = \frac{e^{-\theta/ \gamma_{UB}}}{1+R_B\theta+R_D/\theta}
 \label{defg1}
\end{equation}

We observe that if $\theta = R_B = R_D = 1$ and if $\gamma_{SD}$ and $\gamma_{SB}$ are high enough, both $p$ and $q$ assume a uniform distribution, thus $k$ can be directly retrieved through \eq{sol_uni}.

\subsection{Strategy implementation: AWA-S}
In this section we propose a practical implementation of an adaptive strategy, named AWA-S, to test the effectiveness of our proposed D2D transmission scheme in the presence of topological information.
In AWA-S, the source $S$ sets the transmission power level to $P_S = \xi d_{SD}^{\alpha}N_0$, where $\xi$ is a predefined target SNR of the D2D communication, coherently with the licensed user $U$, which sets its power level to $P_U = \rho d_{UB}^{\alpha}$, where $\rho$ is the target received power at $B$.

Before starting the transmissions, $S$ should make a preliminary choice between the D2D mode and the D2B mode. 
This choice is made by comparing the expected reward offered by the two modes: if the expected reward for the D2D communication is larger than the D2B ($C_{D2D} \geq C_{D2B}$), then the D2D mode is chosen, otherwise the D2B mode is selected. Notice that in the case of the D2B mode, the uplink channel must be orthogonally shared with $U$.

The calculation of the D2D expected reward follows what detailed in the previous sections. 
We assume that $S$ can collect at each time slot the information about the condition of the channels from $U$, namely the fading coefficients $|h_{UB}|^2$ and $|h_{UD}|^2$. From these two values, the success and the blockage probabilities can be derived at each time slot. Furthermore, the CDFs $\Phi_P(x)$ and $\Phi_Q(x)$ are calculated by using topological information.
The optimal strategy, identified by the value of $k$ obtained from \eq{valk} and \eq{d2d_valk}, can hence be adopted by $S$.
The corresponding expected reward is $C_{D2D} = (k\gamma(1-\gamma^W))^{-1}$, as per \eq{valk}.

In case the D2B mode is chosen,
the two users $U$ and $S$ alternatively transmit to $B$ according to a time division multiple access (TDMA) scheme. 
We assume that i) the transmission towards $B$ is performed with target received power $\rho$ (once every two slots), and ii) $B$ forwards data to $D$ on a separate downlink channel. The expected reward in the D2B mode can be calculated as
\begin{equation}
 C_{D2B} = \sum_{i = 0}^{\infty}e^{-N_0/\rho}\gamma^{2i} = \frac{e^{-N_0/\rho}}{1-\gamma^2} \; ,
\end{equation}
where we do not model the downlink channel, assuming that the bottleneck lies in the uplink.

The AWA-S hence works as follows. If $C_{D2B}>C_{D2D}$, $S$ transmits to $B$ with power equal to $\rho d_{SB}^{\alpha}$, sharing the channel with $U$, and then relies on $B$ to deliver its packets to $D$.
If instead $C_{D2B} \leq C_{D2D}$, the D2D communication is enabled, and $S$ follows the optimal strategy detailed in Thm. \ref{teo:optsol}, with transmit power $\xi d_{SD}^{\alpha}N_0$.

\section{Optimal Transmission Policy: multiple actions choice}
\label{sec:opt_mult_act}
In the previous sections, we have analyzed the case in which the source $S$, in D2D mode, can decide either to transmit with a fixed power level, or to defer the transmission. 

In this section, we extend the analysis by allowing the source $S$ to select also the power level to be used, if a transmission is performed.  
Previously, the action space was binary, now the action should be selected as $a \in \{H,T_1,\dots,T_N\}$, when $N$ different power levels are allowed. The power level for action $T_i$ is $P_S 2^{i-1}$. In this case, 
the success and blockage probabilities are both dependent on the power level $i$ chosen, thus making the optimal action choice a more complex problem.

In order to define an optimal strategy also in this case,
we assume that it is possible to express the success and blockage probabilities 
in terms of two measurable parameters $\pi,\phi\in[0,+\infty)$, independent from the adopted transmission power. We will investigate in Sec.~\ref{sec:mulpowlev} how to compute these parameters in a practical communication case.

In particular, we assume that, given a transmit power level $i$, $p$ and $q$ can be expressed as:
\begin{equation}
 p_i(\pi) = \exp\left(-\frac{a \pi +b}{2^{i-1}}\right) \quad\quad\quad q_i(\phi) = \exp\left(-\frac{c \phi +d}{2^{i-1}}\right)
 \label{defpq}
\end{equation}
where $a$, $b$, $c$ and $d$ are positive constants. In this case, the following Lemma holds.

\begin{lem}
Given an action space with multiple power levels allowed, i.e., $a \in \{H,T_1,\dots,T_N\}$, there exists a linear function $\phi = g_0(\pi) = m \pi + \xi_0$ such that the optimal transmission strategy $\mu(\pi,\phi)$ is to transmit whenever $\phi>g_0(\pi)$ and to defer the transmission otherwise. 
\label{lem:binary}
\end{lem}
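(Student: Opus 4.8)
The plan is to mirror the derivation of Thm.~\ref{teo:optsol}, since in the time-uncorrelated regime the value function again collapses to a closed form. First I would set up the Bellman recursion for the enlarged action set $\{H,T_1,\dots,T_N\}$. Deferring returns $S$ to a free state with a freshly drawn pair $(\pi',\phi')$, so its continuation value is $\gamma C$ with $C=\mathbb{E}[v(\pi,\phi)]$, exactly as in \eq{valfin_HS}. Transmitting with level $i$ earns immediate reward $p_i(\pi)$ and triggers a length-$W$ blockage with probability $q_i(\phi)$; repeating the computation \eq{valfin_TS}--\eq{sol_disc} for each level gives the transmit value $p_i+\gamma(1-q_i+\gamma^Wq_i)C=\gamma C+(p_i-q_i/k)$, with $k^{-1}=\gamma C(1-\gamma^W)$ defined exactly as in \eq{valk}. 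Thus
\begin{equation}
 v(\pi,\phi)=\gamma C+\max\Bigl(0,\ \max_{1\le i\le N}\bigl[p_i(\pi)-q_i(\phi)/k\bigr]\Bigr),
\end{equation}
and the optimal action is to transmit precisely when $\max_i[p_i-q_i/k]>0$.

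The core step is to show this transmit region is a half-plane bounded by a line of slope $a/c$. Substituting \eq{defpq} and taking logarithms, the per-level condition $p_i(\pi)>q_i(\phi)/k$ becomes
\begin{equation}
 2^{\,i-1}\ln k\ >\ (a\pi+b)-(c\phi+d)\ =:\ w(\pi,\phi).
\end{equation}
Each level therefore contributes the half-plane $\{w<2^{\,i-1}\ln k\}$, and all of these share the same boundary slope $a/c$ because $w$ is affine in $(\pi,\phi)$ with fixed coefficients. Since a transmission is worthwhile iff some level is profitable, the transmit region is the union $\bigcup_i\{w<2^{\,i-1}\ln k\}=\{w<\max_i 2^{\,i-1}\ln k\}$. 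As $2^{\,i-1}$ is increasing in $i$, this maximum is attained at $i^\star=N$ when $k>1$ and at $i^\star=1$ when $k<1$ (all levels coincide when $k=1$), so the region is the single half-plane $\{w<2^{\,i^\star-1}\ln k\}$, which rearranges to
\begin{equation}
 \phi\ >\ \frac{a}{c}\,\pi+\frac{(b-d)-2^{\,i^\star-1}\ln k}{c}\ =:\ g_0(\pi).
\end{equation}
This is exactly $\phi>g_0(\pi)$ with slope $m=a/c>0$ and intercept $\xi_0=[(b-d)-2^{\,i^\star-1}\ln k]/c$, as claimed.

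The one point demanding care---and the main obstacle---is justifying the closed form for $v$ before invoking the geometry: I must argue that the fixed point $C=\mathbb{E}[v]$ exists and is strictly positive, so that $k>0$ and $\ln k$ is well defined. This follows from the same contraction argument used for the single-power case, since enlarging the action set only inserts finitely many additional affine candidates inside the $\max$, preserving continuity of $v$ and hence the branch-matching at $w=2^{\,i^\star-1}\ln k$ that pins down $k^{-1}=\gamma C(1-\gamma^W)$. Once this is secured, the remaining union-of-parallel-half-planes argument is purely algebraic and yields the linear threshold of the Lemma; I would also note that an analogue of Lemma~\ref{lem:mu} (monotonicity of $\mu$ in $\pi$ and $\phi$) is not needed here, as the explicit exponential forms \eq{defpq} make the threshold directly computable.
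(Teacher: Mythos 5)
Your proof is correct and follows essentially the same route as the paper's: both reduce optimality of deferring to the per-level comparisons $q_i \gtrless k\,p_i$ via the fundamental reward equation \eq{funrew}, take logarithms of \eq{defpq} to obtain parallel linear thresholds in $(\pi,\phi)$, and identify the binding level as $i=1$ for $k<1$ and $i=N$ for $k>1$ (your union of transmit half-planes is just the De Morgan complement of the paper's intersection of defer conditions, yielding the same $g_0$). Your added care in re-deriving the value-function closed form for the enlarged action set is a minor strengthening of a step the paper handles by appeal to "the previous case," not a different approach.
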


From the proof reported in Appendix~\ref{app:proof2}, we find that $g_0(\pi) = a\pi/c + \min_i(\xi(i))$, where $\xi(i) = (2^{i-1}\ln(k^{-1})+b-d)/c$. We also define $h_0(\phi)$ as the inverse of $g_0(\pi)$.
We call $\mathcal{U}$ the set $\{(\pi, \phi):\pi,\phi>0\}$. This set is partitioned into the disjoint sets $\mathcal{A}_0, \mathcal{A}_1,\cdots,\mathcal{A}_N$. $\mathcal{A}_0$ is the subset in which the optimal strategy is to defer transmission, as defined in Lemma~\ref{lem:binary} (that is, $\phi<g_0(\pi)$, or equivalently $\pi>h_0(\phi)$), while in $\mathcal{A}_i$, with $i>0$, the optimal action is $T_i$. With this notation, we can state the following Lemma.

\begin{lem}
For every $1\leq i<N$, the region $\mathcal{A}_i$, with $0<1<N$, can be adjacent only to the regions $\mathcal{A}_{i-1}$ and $\mathcal{A}_{i+1}$. 
\label{lem:many_regions}
\end{lem}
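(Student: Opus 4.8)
The plan is to collapse the multi-way comparison among the actions $T_1,\dots,T_N$ into the study of a single real function, and then to exploit a concavity dichotomy. As established in the proof of Lemma~\ref{lem:binary}, in the time-uncorrelated MDP the value of a free state is $\gamma C$ plus the net gain of the best admissible action, so action $T_i$ beats halting by the margin $\Delta_i(\pi,\phi)=p_i(\pi)-q_i(\phi)/k$, where $k$ is the analogue of the slope in Thm.~\ref{teo:optsol}. The optimal policy at $(\pi,\phi)$ thus picks the index maximising $\Delta_i$, transmitting when this maximum is positive and halting otherwise, so $\mathcal{A}_i$ is precisely the set where $T_i$ is the (positive) maximiser. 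The first observation is that the exponential form \eqref{defpq} makes the two probabilities power-related: since $\ln p_i/\ln q_i=(a\pi+b)/(c\phi+d)=:r$ is independent of $i$, we have $p_i=q_i^{\,r}$. Writing $x_i=q_i(\phi)=\exp(-(c\phi+d)/2^{i-1})$, which is strictly increasing in $i$ and lies in $(0,1)$, the margins become samples of one function, $\Delta_i=G(x_i)$ with $G(x)=x^{r}-x/k$, where $r$ is constant at each fixed point.

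The decisive fact is $G''(x)=r(r-1)x^{r-2}$, so $G$ is strictly concave for $r<1$, affine for $r=1$, and strictly convex for $r>1$; the line $a\pi+b=c\phi+d$ separates these regimes in $\mathcal{U}$. In the concave regime I would argue by Jensen: if two distinct indices $i<j$ both maximise $G$ over the sampled points, then for any intermediate sample $x_i<x_\ell<x_j$ strict concavity gives $G(x_\ell)>G(x_i)=G(x_j)$, contradicting maximality unless there is no intermediate index, i.e. $j=i+1$. Hence at any point where two transmit regions $\mathcal{A}_i$ and $\mathcal{A}_j$ with $j\ge i+2$ had a common boundary, the index $i+1$ would also be optimal, making it a junction of three or more regions rather than a boundary arc; so within $\{r<1\}$ only consecutive regions share a boundary of positive length.

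The main obstacle is the convex regime $r>1$, where $G$ is not unimodal and Jensen fails. Here I would instead show the regime hosts no interior region. Since $G(0)=0$, $G'(0^+)=-1/k<0$ and $G'$ is increasing, $G$ decreases then increases with a single minimiser $x_{\min}$; it is negative on $(0,x_0)$ and positive and strictly increasing on $(x_0,1)$, where $x_0=k^{-1/(r-1)}$ is the unique positive root and $x_0>x_{\min}$. Thus the maximiser over the samples, when positive, is always the largest index $N$, so only $\mathcal{A}_0$ or $\mathcal{A}_N$ occur when $r>1$; a direct check of the affine case $r=1$ shows the line itself lies in the interior of $\mathcal{A}_0$ (if $k<1$) or $\mathcal{A}_N$ (if $k>1$). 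Hence every interior region $\mathcal{A}_i$ with $0<i<N$ is confined to the concave regime, and crossing $r=1$ creates no adjacency between non-consecutive regions. Finally, to rule out an interior $\mathcal{A}_i$ ($i\ge2$) touching the halt region, I would use the profitability lines from the proof of Lemma~\ref{lem:binary}: $\Delta_i>0\iff\phi>\tfrac{a}{c}\pi+\xi(i)$, so as $\phi$ increases across $\partial\mathcal{A}_0$ the first action to become profitable is the one minimising $\xi(i)$, namely $i=1$ or $i=N$; therefore $\mathcal{A}_0$ borders only $\mathcal{A}_1$ or $\mathcal{A}_N$. Combining the three pieces gives that each interior $\mathcal{A}_i$ is adjacent only to $\mathcal{A}_{i-1}$ and $\mathcal{A}_{i+1}$.
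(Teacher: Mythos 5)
Your proof is correct, but it takes a genuinely different route from the paper's. The paper's argument interpolates the reward in the power index itself: it extends $r(\pi,\phi,i)$ to a continuous function of a real-valued level $x$ (replacing $2^{i-1}$ by $2^{x-1}$), asserts that for fixed $(\pi,\phi)$ this function is either strictly monotone or has a unique global maximum at an explicit $x^*$, and concludes that a tie between levels $i$ and $i+j$ ($j>1$) would force every intermediate level, in particular $i+1$, to be strictly better --- one unified contradiction argument, but with the unimodality claim essentially asserted rather than derived, and with adjacency to $\mathcal{A}_0$ explicitly excluded from the lemma. You instead keep the index discrete and transform the variable: the observation $p_i=q_i^{\,r}$ with $r=(a\pi+b)/(c\phi+d)$ independent of $i$ collapses the margins into samples of $G(x)=x^{r}-x/k$, and the sign of $G''$ splits $\mathcal{U}$ into a strictly concave regime (where Jensen kills any tie between non-consecutive maximisers), a convex regime (where you show only $\mathcal{A}_0$ or $\mathcal{A}_N$ can occur, so no interior region even reaches it), and the separating line (interior to $\mathcal{A}_0$ or $\mathcal{A}_N$). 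What your approach buys is verifiability --- every step is an elementary computation on a power function --- plus a strictly stronger conclusion: your last step, using the profitability lines $\phi=\frac{a}{c}\pi+\xi(i)$, shows that $\mathcal{A}_0$ borders only $\mathcal{A}_1$ (for $k<1$) or $\mathcal{A}_N$ (for $k>1$), a fact the paper defers to the proof of Thm.~\ref{teo:optsol_multiple} rather than proving here. What the paper's approach buys is uniformity: one argument, no case split on $r$ or $k$. Two minor points: your sentence claiming that a tie between $\mathcal{A}_i$ and $\mathcal{A}_j$ would make $i+1$ ``also optimal, making it a junction of three or more regions'' slightly undersells your own inequality --- strict concavity makes $i+1$ \emph{strictly} better, which directly contradicts $i$ and $j$ being maximisers, so no appeal to junctions is needed; and the degenerate case $k=1$ (where $G$ vanishes identically on the line $r=1$) is left unhandled, though the paper's proof shares this gap.
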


The proof is reported in Appendix~\ref{app:proof3}. Note that $\mathcal{A}_0$ is not included in the Lemma. As to the boundary between two adjacent regions, the following Lemma holds:

\begin{lem}
 The boundary between two existing regions $\mathcal{A}_i$ and $\mathcal{A}_{i+1}$, for $i\geq1$, can always be expressed as a continuous function of either $\pi$ or $\phi$, defined over the entire domain $\mathbb{R}^+$.
 \label{lem:boundaries}
\end{lem}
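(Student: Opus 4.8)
The plan is to reduce the boundary condition to an explicit quadratic relation and then read off single-valuedness from a discriminant estimate. On the shared boundary of $\mathcal{A}_i$ and $\mathcal{A}_{i+1}$ the two candidate actions must yield the same expected reward. Using the reward decomposition behind \eq{funrew} in the multi-action setting, action $T_j$ carries the reward $r_j = p_j(\pi) - q_j(\phi)/k$, so the boundary is the zero set of $G(\pi,\phi) := r_{i+1}-r_i$. Substituting \eq{defpq} and introducing the monotone changes of variable $P := p_{i+1}(\pi) = \exp(-(a\pi+b)/2^{i})$ and $Q := q_{i+1}(\phi)=\exp(-(c\phi+d)/2^{i})$, which are strictly decreasing continuous bijections of $\pi,\phi\in[0,\infty)$ onto $(0,P_0]$ and $(0,Q_0]$ respectively, one has $p_i=P^2$ and $q_i=Q^2$, and the boundary collapses to the symmetric quadratic relation
\be
 P^2-P=\tfrac{1}{k}\left(Q^2-Q\right).
 \label{planbdry}
\ee

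Next I would solve \eq{planbdry} explicitly for one variable in terms of the other. Solving for $Q$ gives $Q=\tfrac12\bigl(1\pm\sqrt{1-4kP(1-P)}\,\bigr)$, and solving for $P$ gives $P=\tfrac12\bigl(1\pm\sqrt{1-\tfrac{4}{k}Q(1-Q)}\,\bigr)$. The crucial observation is the elementary bound $P(1-P)\le\tfrac14$ and $Q(1-Q)\le\tfrac14$ on $(0,1)$. Hence for $k\le 1$ the radicand of the first form obeys $1-4kP(1-P)\ge 1-k\ge 0$, while for $k\ge 1$ the radicand of the second form obeys $1-\tfrac{4}{k}Q(1-Q)\ge 1-\tfrac1k\ge 0$. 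In each case the appropriate discriminant is bounded away from zero whenever $k\neq 1$, so exactly one of the two square-root branches is real and free of branch points throughout the relevant range.

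I would then translate this back. Because $P$ and $Q$ are monotone bijections of $\pi$ and $\phi$, a branch-point-free single-valued relation $Q=Q_\pm(P)$ is equivalent to $\phi$ being a single-valued continuous function of $\pi$, and likewise $P=P_\pm(Q)$ corresponds to $\pi$ as a function of $\phi$. Thus for $k<1$ the boundary is the graph of $\phi$ over $\pi$, and for $k>1$ the graph of $\pi$ over $\phi$; continuity is inherited from the continuity of the square root (no branch point is crossed) and of the exponential substitutions. Single-valuedness over the whole axis follows because the two algebraic branches are disjoint when the discriminant is strictly positive, while the boundary itself is a connected arc separating the regions organised by Lemma~\ref{lem:many_regions}, so it must lie on a single branch; the linear boundary with $\mathcal{A}_0$ identified in Lemma~\ref{lem:binary} is excluded from the statement and needs no treatment here.

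The main obstacle is precisely the non-injectivity of the map $t\mapsto t^2-t$ underlying \eq{planbdry}: the algebraic curve can fold, and a fold (a branch point, where the two roots coalesce at the value $1/2$) would destroy single-valuedness with respect to that axis. The whole argument hinges on showing that no fold occurs along the chosen axis, which is exactly what the dichotomy $k\lessgtr 1$ together with the $\tfrac14$ bound delivers. The borderline case $k=1$ is the delicate one, since there the unique branch point sits at $P=Q=\tfrac12$; I would handle it separately, either by showing the active boundary arc avoids this point or by checking directly that one of the two functional representations remains valid there, invoking the continuity of \eq{funrew} and the ordering from Lemma~\ref{lem:many_regions} to rule out an actual crossing.
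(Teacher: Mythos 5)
Your reduction is essentially the paper's own: you equate the rewards of levels $i$ and $i+1$, substitute $P=p_{i+1}(\pi)$, $Q=q_{i+1}(\phi)$ so that $p_i=P^2$, $q_i=Q^2$, obtain the quadratic relation $P^2-P=(Q^2-Q)/k$, and resolve it with the same discriminant bound $4P(1-P)\le 1$ and the same $k\lessgtr 1$ dichotomy; this reproduces exactly the paper's two-branch curves $g_i^{\pm}(\pi)$ and $h_i^{\pm}(\phi)$ in \eq{defg} and \eq{defh}. The genuine gap is in the final step, where one must pass from the two-branch equal-reward locus to the actual boundary between $\mathcal{A}_i$ and $\mathcal{A}_{i+1}$.

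Your argument for that step --- ``the boundary is a connected arc separating the regions organised by Lemma~\ref{lem:many_regions}, so it must lie on a single branch'' --- does not close it, for three reasons. First, nothing establishes that the boundary is connected: Lemma~\ref{lem:many_regions} only restricts \emph{which} regions may be adjacent, not the topology of their common boundary, so a priori the boundary could consist of pieces of both branches. Second, even granting it lies on one branch, that only makes the boundary a \emph{subset} of the graph of a continuous function, whereas the lemma asserts it \emph{is} that graph over the entire domain $\mathbb{R}^+$; you still need that every point of the selected branch is a genuine boundary point. Third, you never identify which branch is the true one. The paper settles all three issues with the one comparison your proposal explicitly sets aside as needing ``no treatment'': it compares both branches against the defer boundary $g_0(\pi)$ of Lemma~\ref{lem:binary}, showing $g_i^-(\pi)<g_0(\pi)<g_i^+(\pi)$ for all $\pi$ when $k<1$ (and $h_i^+(\phi)<h_0(\phi)<h_i^-(\phi)$ when $k>1$). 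Hence the minus branch lies strictly inside $\mathcal{A}_0$, where the common reward of levels $i$ and $i+1$ is dominated by deferring, so it is not a boundary between transmit regions at all; and the plus branch is disjoint from $\mathcal{A}_0$, so \emph{all} of it is the $\mathcal{A}_i$/$\mathcal{A}_{i+1}$ boundary, i.e., the boundary is $\mathcal{G}_i^+=\mathcal{G}_i\setminus\mathcal{A}_0$ and is defined over the whole of $\mathbb{R}^+$. Without this interplay with $\mathcal{A}_0$, the single-branch, full-domain conclusion of the lemma does not follow from your discriminant analysis alone.
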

This Lemma is proved in Appendix~\ref{app:proof4}, where we compute that the boundary between $\mathcal{A}_i$ and $\mathcal{A}_{i+1}$ is given by
\begin{equation}
 g_i^{+}(\pi) = -\frac{d}{c} - \frac{2^i}{c}\ln\left(\frac{1}{2}-\frac{1}{2}\sqrt{1-4kp_{i+1}(\pi)\left(1-p_{i+1}(\pi)\right)}\right) \; ,
\end{equation}
if $k<1$, otherwise it is given by
\begin{equation}
 h_i^{+}(\phi) = -\frac{b}{a} - \frac{2^i}{a}\ln\left(\frac{1}{2}+\frac{1}{2}\sqrt{1-\frac{4}{k}q_{i+1}(\phi)\left(1-q_{i+1}(\phi)\right)}\right) \; .
\end{equation}

We can state the following remarks about the boundary functions $g_i^+(\pi)$ and $h_i^+(\phi)$:
\begin{rem}
 $g_i^+(\pi)\cap g_j^+(\pi)=\emptyset$, and $h_i^+(\phi)\cap h_j^+(\phi)=\emptyset$, $\forall i\neq j$.
 \label{rem:noint}
\end{rem}
In fact, if the intersection between $g_i^+(\pi)$ and $g_j^+(\pi)$ was not empty, there would be a point $(\pi,\phi)$ where the maximum reward can be reached using more than two power levels, which is not possible, due to the characteristics of the reward functions $r(\pi,\phi,i)$ detailed in the proof of Lemma \ref{lem:many_regions}.

\begin{rem}
 When $k<1$, $g_i^+(\pi)\cap\mathcal{U}\neq\emptyset$, $\forall i\in\{1,2,\ldots,N\}$.
 \label{rem:exig}
\end{rem}
This follows from the fact that $g_i^+(\pi)>g_0(\pi)$, $\forall\pi\in\mathbb{R}^+$ (which can be proved through calculations), since $g_0(\pi)$ is an increasing linear function. Note that this implies the existence within $\mathcal{U}$ of all the regions $\mathcal{A}_i$, when $k<1$.
The same does not hold when $k>1$.

\begin{rem}
 For $k<1$, $g_i^+(\pi) > g_j^+(\pi) > g_0(\pi)$, $\forall\pi\in\mathbb{R}^+$, $\forall i>j$. Similarly, for $k>1$, $h_0(\phi) > h_i^+(\phi) > h_j^+(\phi)$, $\forall\phi\in\mathbb{R}^+$, $\forall i>j$.
 \label{rem:order}
\end{rem}
It is not immediate to show the inequality via algebraic derivation. However, one can observe that each function $g_i^+(\pi)$, for $\pi\rightarrow\infty$, approaches asymptotically the linear function $\tilde{g}_i^+(\pi)$:
\begin{equation}
 \tilde{g}_i^+(\pi) = \frac{a}{c}\pi + \frac{1}{c}\left(2^i\ln(k^{-1})+b-d\right)
\end{equation}
When $k<1$, from the fact that $\tilde{g}_i^+(\pi) > \tilde{g}_j^+(\pi)$, $\forall\pi\in\mathbb{R}^+$ for any $i>j$, it follows that $\exists \Pi\in\mathbb{R}^+:\forall\pi>\Pi\,, \, g_i^+(\pi)>g_j^+(\pi)$. Since $g_i^+(\pi)$ and $g_j^+(\pi)$ never intersect, we obtain the statement in Remark \ref{rem:order}.
Proving the same about the functions $h_i^+(\phi)$ is more involved, but can be done by computing the intersections between each $h_i^+(\phi)$ and a properly chosen linear function $\pi = \Pi$, and verifying how these points are sorted.

Using the previous Lemmas and remarks, we can state the following Theorem.
\begin{teo}
Given an action space with $N$ power levels, i.e., $a \in \{H,T_1,\dots,T_N\}$, it is always possible to divide the space $(\pi,\phi)$ into at least 2 and at most $N+1$ continuous regions, such that the optimal policy $\mu(\pi,\phi)$ is always unambiguously defined, with the exception of the boundaries between the regions, which have measure zero.
 \label{teo:optsol_multiple}
\end{teo}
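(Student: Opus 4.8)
The plan is to assemble the preceding lemmas and remarks into a single partition argument, treating the two regimes $k<1$ and $k>1$ separately, since Lemma~\ref{lem:boundaries} expresses the internal boundaries as graphs of functions of $\pi$ in the former case ($g_i^+(\pi)$) and of $\phi$ in the latter ($h_i^+(\phi)$). First I would invoke Lemma~\ref{lem:binary}: the line $\phi=g_0(\pi)=m\pi+\xi_0$ already splits $\mathcal{U}$ into the halt region $\mathcal{A}_0=\{\phi<g_0(\pi)\}$ and the complementary transmit region. Because $g_0$ is an increasing line crossing the interior of the first quadrant, both sides meet $\mathcal{U}$ in a nonempty open set; this delivers the lower bound of two regions and supplies the outermost boundary of the partition. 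The upper bound is immediate on the abstract side: there are only $N+1$ available actions $\{H,T_1,\dots,T_N\}$, so at most $N+1$ regions can arise, one per action.

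The substantive content is that each action's optimal region is a single \emph{connected} set, and this is where the chain structure enters. I would combine three facts about the transmit region: Lemma~\ref{lem:many_regions}, so that $\mathcal{A}_i$ can touch only $\mathcal{A}_{i-1}$ and $\mathcal{A}_{i+1}$ and the regions therefore form a linear chain rather than a tangled adjacency pattern; Remark~\ref{rem:noint}, so that distinct boundary curves never cross; and Remark~\ref{rem:order}, so that the curves are totally ordered in the index $i$. For $k<1$ these yield the nested ordering $g_0(\pi)<g_1^+(\pi)<\dots<g_{N-1}^+(\pi)$ valid for every $\pi\in\mathbb{R}^+$, so the boundaries are a stack of non-crossing graphs spanning the whole domain; Remark~\ref{rem:exig} then guarantees each of the $N+1$ strips actually meets $\mathcal{U}$, giving exactly $N+1$ regions. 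For $k>1$ the same ordering holds for the graphs $\pi=h_i^+(\phi)$ (with $h_0$ now the rightmost boundary), but Remark~\ref{rem:exig} no longer applies, so some strips may be empty and the count can drop below $N+1$.

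Connectedness then follows from the graph structure: each nonempty region is the set between two consecutive non-crossing boundary curves (for $k<1$, bounded below by $\phi=g_{i-1}^+(\pi)$, with $g_0$ as the lowest curve, and above by $\phi=g_i^+(\pi)$; analogously a $\pi$-strip between two functions of $\phi$ when $k>1$). Every vertical (resp.\ horizontal) slice of such a set is an interval, and since the bounding functions are continuous on all of $\mathbb{R}^+$ by Lemma~\ref{lem:boundaries}, the region is path-connected, i.e.\ ``continuous'' in the sense of the statement. For the measure-zero claim I would note that in the interior of each region the defining inequalities are strict, so the reward of one action strictly dominates and $\mu(\pi,\phi)$ is single-valued; ties can occur only on the boundary curves, each the graph of a continuous function and hence a Lebesgue-null subset of $\mathbb{R}^2$, so their finite union is null.

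The step I expect to be the real obstacle is the bookkeeping in the regime $k>1$. Because Remark~\ref{rem:exig} fails there, I must show that deleting the empty strips still leaves a chain of connected regions whose boundaries remain a totally ordered, non-crossing family, so that the surviving partition is genuinely into between $2$ and $N+1$ pieces. Verifying that a vanished intermediate region $\mathcal{A}_i$ cannot split a neighbouring region into two components—equivalently, that the chain structure of Lemma~\ref{lem:many_regions} is preserved under removal of empty links—is where the argument needs the most care; the $k<1$ case is essentially immediate once the ordering and existence remarks are in hand.
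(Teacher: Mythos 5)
Your construction is the same as the paper's: the curves $g_0(\pi)$, $g_i^+(\pi)$, $h_i^+(\phi)$ from Lemmas \ref{lem:binary} and \ref{lem:boundaries}, non-crossing from Remark \ref{rem:noint}, existence for $k<1$ from Remark \ref{rem:exig}, ordering from Remark \ref{rem:order}, and the regions read off as strips between consecutive curves. For $k<1$ your argument is complete and matches the paper's. The gap is exactly the step you flag and then leave open: the $k>1$ bookkeeping. You worry that a ``vanished intermediate region'' $\mathcal{A}_i$ might split a neighbouring region into two components, and you never rule this out --- but the remark you already cite does rule it out, and this is precisely how the paper closes the proof. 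By Remark \ref{rem:order}, $h_j^+(\phi) < h_i^+(\phi)$ for every $\phi\in\mathbb{R}^+$ whenever $j<i$; hence if some curve $h_i^+$ lies entirely outside $\mathcal{U}$ (i.e., $h_i^+(\phi)<0$ for all $\phi>0$), then every lower-indexed curve $h_j^+$, $j<i$, lies entirely outside $\mathcal{U}$ as well. Consequently the empty regions always form a consecutive block $\mathcal{A}_1,\ldots,\mathcal{A}_{i^*}$, where $i^*$ is the largest index whose curve is entirely negative: they sit at the low-power end of the chain, never in its interior, so no surviving region can be split. The surviving regions $\mathcal{A}_0,\mathcal{A}_N,\ldots,\mathcal{A}_{i^*+1}$ remain strips between consecutive surviving curves (with $\mathcal{A}_{i^*+1}$ becoming the unbounded leftmost one, as in the discussion after (\ref{optpol_gen_highk})), and their number is $N+1-i^*$, which lies between $2$ and $N+1$ because $g_0$ is an increasing line and therefore always cuts $\mathcal{U}$ into two nonempty parts. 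Without this observation your proof does not establish the theorem for $k>1$; with it, the case is closed in one line.

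A secondary point: your ``abstract'' upper bound (at most $N+1$ regions because there are only $N+1$ actions) is not valid on its own, since the theorem counts \emph{connected} regions --- if some $\mathcal{A}_i$ were disconnected, the number of pieces could exceed the number of actions. The bound becomes legitimate only after connectedness of each nonempty $\mathcal{A}_i$ is established via the strip structure, which you do later; the paper instead derives the bound directly from the geometric fact that $N$ non-crossing continuous graphs over $\mathbb{R}^+$ divide $\mathcal{U}$ into at most $N+1$ pieces. Presenting the counting argument before the connectedness argument, as you do, puts the cart before the horse.
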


\begin{proof}
The boundary between $\mathcal{A}_i$ and $\mathcal{A}_{i+1}$ is by definition $g_i^+(\pi)$, if $k<1$, or $h_i^+(\phi)$, if $k>1$. Henceforth, these curves, together with $g_0(\pi)$, are the only admissible boundaries between the regions $\mathcal{A}_i$'s.
According to Remark \ref{rem:noint}, these $N$ curves never intersect each other. Since they are continuous functions of either $\pi$ (if $k<1$) or $\phi$ (if $k>1$), it follows that they divide the area $\mathcal{U}$ into at most $N+1$ regions. The number of regions can however be lower. In fact, while it is always $g_i^+(\pi)\cap\mathcal{U}\neq\emptyset$, $\forall i\in\{1,2,\ldots,N\}$, as per Remark \ref{rem:exig}, the same does not hold for the curves $h_i^+(\phi)$.
Indeed, when $k>1$, there can be values of $a$, $b$, $c$, $d$ and $i$ such that $h_i^+(\phi)<0$, $\forall \phi > 0$, meaning that the entire curve lies outside the region $\mathcal{U}$. In this case, no region $\mathcal{A}_i$ exists, as well as no region $\mathcal{A}_j$, with $0<j<i$, due to Remark \ref{rem:order}.
In the extreme case, if $h_N^+(\phi)<0$, $\forall \phi > 0$, then $\mathcal{U}$ is divided in only 2 regions, namely $\mathcal{A}_0$ and $\mathcal{A}_N$, by the curve $g_0(\pi)$ (or, equivalently, $h_0(\phi)$).
\end{proof}

The exact form of the optimal policy $\mu(\pi, \phi)$, for $k<1$, is
\begin{equation}
 \mu(\pi,\phi) = \left\{
	    \begin{array}{ll}
             0 & \text{if } \phi < g_0(\pi) \\
             1 & \text{if } g_0(\pi) < \phi < g_1^+(\pi) \\
             i & \text{if } g_{i-1}^+(\pi) < \phi < g_i^+(\pi)\text{, } \forall i\in\{2,3,\ldots,N-1\} \\
             N & \text{if } \phi > g_{N-1}^+(\pi)
            \end{array}
            \right.
            \label{optpol_gen_lowk}
\end{equation}

The first part of the expression in (\ref{optpol_gen_lowk}) immediately follows from Lemma \ref{lem:binary}.
In the proof of the same lemma, we show that, when $k<1$, the curve $g_0(\pi)$ is the boundary between $\mathcal{A}_0$ and $\mathcal{A}_1$ since $\arg\min(\xi_i) = 1$.
From Remark \ref{rem:exig} we know that all the boundaries $g_i^+(\pi)$ exist within $\mathcal{U}$, and therefore $\mathcal{U}$ is partitioned into $N+1$ areas. Since the functions $g_i^+(\pi)$ are sorted according to Remark \ref{rem:order}, and given Lemmma~\ref{lem:many_regions}, we get that $\mathcal{A}_1$ must lie between $g_0(\pi)$ and $g_1^+(\pi)$.
Analogously, $\mathcal{A}_i$ is bounded by $g_{i-1}^+(\pi)$ and $g_i^+(\pi)$, $\forall i\in{2,3,\ldots,N-1}$. Finally, the region $\mathcal{A}_N$ is the area of $\mathcal{U}$ which lies above $g_{N-1}^+(\pi)$.

Similarly, if $k>1$, the optimal strategy is
\begin{equation}
 \mu(\pi,\phi) = \left\{
 \begin{array}{ll}
  0 & \text{if } \pi > h_0(\phi) \\
  N & \text{if } h_{N-1}^+(\phi) \leq \pi \leq h_0(\phi)\\
  i & \text{if } h_{i-1}^+(\phi) \leq \pi \leq h_i^+(\phi) \text{, } \forall i\in\{2,3,\ldots,N-1\}\\
  1 & \text{if } \pi \leq h_1^+(\phi)
  \label{optpol_gen_highk}
 \end{array}
\right.
\end{equation}
The first part of (\ref{optpol_gen_highk}) follows from Lemma~\ref{lem:binary}, since $h_0(\phi)$ is the inverse function of $g_0(\pi)$, which is a monotonically increasing function.
Now, from Remark \ref{rem:order} it follows that there is a region $\mathcal{R}$ delimited by $h_0(\phi)$ and $h_{N-1}^+(\phi)$. This region must be $\mathcal{A}_N$. In fact, $h_{N-1}^+(\phi)$, according to Lemma~\ref{lem:boundaries}, is the boundary between $\mathcal{A}_{N-1}$ and $\mathcal{A}_N$; however $\mathcal{R}$ cannot be $\mathcal{A}_{N-1}$, since this region must also be delimited by $h_{N-2}^+(\phi)$. It follows that $\mathcal{R}=\mathcal{A}_N$, which proves the second line of (\ref{optpol_gen_highk}).
By exploiting Lemma~\ref{lem:many_regions} for the other boundaries, (\ref{optpol_gen_highk}) is proved,

It must be noted, however, that in this case the existence of a non empty intersection between $\mathcal{U}$ and $h_i^+(\phi)$ is not guaranteed, since it can happen, for some values of $a$, $b$, $c$, $d$ and $i$, that $h_i^+(\phi)<0$, $\forall \phi\in\mathbb{R}^+$. Since the curves $h_i^+(\phi)$ are sorted, according to Remark \ref{rem:order}, we call $i^*$ the maximum $i$ such that $\phi_{i^*}^+(\phi)<0$, $\forall \phi\in\mathbb{R}^+$.
This implies that $\mathcal{U}$ is partitioned into $N+1-i^*$ regions.
If $i^* = N-1$, then the only boundary in $\mathcal{U}$ is $h_0(\phi)$, which splits $\mathcal{U}$ into $\mathcal{A}_0$ and $\mathcal{A}_N$. If on the contrary $i^*< N-1$, then the areas $\mathcal{A}_i$, with $i^*+1\leq i<N$ also exist. In this case, $\mathcal{A}_{i^*+1}$ is the area left of $h_{i^*+1}(\phi)$, while $\mathcal{A}_j$, for $i^*+1<j<N$, is the region between $h_{j-1}^+(\phi)$ and $h_j^+(\phi)$. The optimal strategy $\mu(\pi,\phi)$ can still be expressed as in (\ref{optpol_gen_highk}), but it may happen that no couple $(\pi,\phi)\in\mathcal{U}$ can satisfy the conditions to get $\mu(\pi,\phi) = i$, for some $0<i<N$.

\section{D2D communications in Cellular Networks: multiple power levels}
\label{sec:mulpowlev}
In this section, we assume that $N$ power levels $P_1,P_2,\ldots,P_N$ are allowed, with $P_i = P_S2^{i-1}$, and we apply the results obtained in Sec.~\ref{sec:opt_mult_act} in a D2D network scenario. In this case, both the success and the blockage probabilities depend on the adopted transmission power.

The success probability is given by the probability that the SINR at $D$ is greater than $\theta$, while the blockage probability is the probability that the SINR at the base station $B$ falls below $\theta$. For each trasmit power level $i$, we can hence express them as:
\be
 p_i(\pi) = \exp\left(-\frac{\theta(\pi+N_0)}{2^{i-1}AP_S d_{SD}^{-\alpha}}\right) \; ,
 \ee
 \be
 q_i(\phi) =
 \begin{cases}
  \displaystyle\exp\left(-\frac{\phi-\theta N_0}{2^{i-1}\theta AP_S d_{SB}^{-\alpha}}\right) & \text{if } \phi > \theta N_0 \\
  1 & \text{if } \phi \leq \theta N_0 \; ,
 \end{cases}
\ee
where $\pi = AP_U d_{UD}^{-\alpha}|h_{UD}(t)|^2$ is defined as the interference at $D$ caused by the licensed user $U$, while $\phi = AP_U d_{UB}^{-\alpha}|h_{UB}(t)|^2$ is the useful signal from $U$ at $B$.

The expressions of $p_i(\pi)$ and $q_i(\phi)$ are thus analogous to those in (\ref{defpq}), with $a=\theta/(AP_S d_{SD}^{-\alpha})$, $b = N_0a$, $c=1/(\theta AP_S d_{SB}^{-\alpha})$, and $d = -\theta N_0 c$. 
The only issue lies in the fact that $d<0$. However, it can be shown that even in this case, Theorem \ref{teo:optsol_multiple} is still valid, with $g_0(\pi)$ replaced by 
\begin{equation}
 \tilde{g}_0(\pi) = g_0(\pi)\mathbb{S}\left(\pi + \frac{2^{i-1}}{a}\ln(k^{-1}) + \frac{b}{a}\right) \; ,
\end{equation}
where $\mathbb{S}(\cdot)$ is the Heaviside step function. Notice that, since $\pi,a,b>0$, when $k<1$ we have again $\tilde{g}_0(\pi) = g_0(\pi)$.

The optimal strategy can be written as in (\ref{optpol_gen_lowk}), if $k<1$, and as in (\ref{optpol_gen_highk}) if $k>1$, 
by substituting 
$g_0(\pi)$ with $\tilde{g}_0(\pi)$, and $h_0(\phi)$ with $\tilde{h}_0(\phi) = \max(h_0(\phi), 2^{i-1}\ln(k)/a - b/a)$.
\begin{figure}
    \centering
    \subfigure[$k = 0.5571$]
    {\includegraphics[width=\figw]{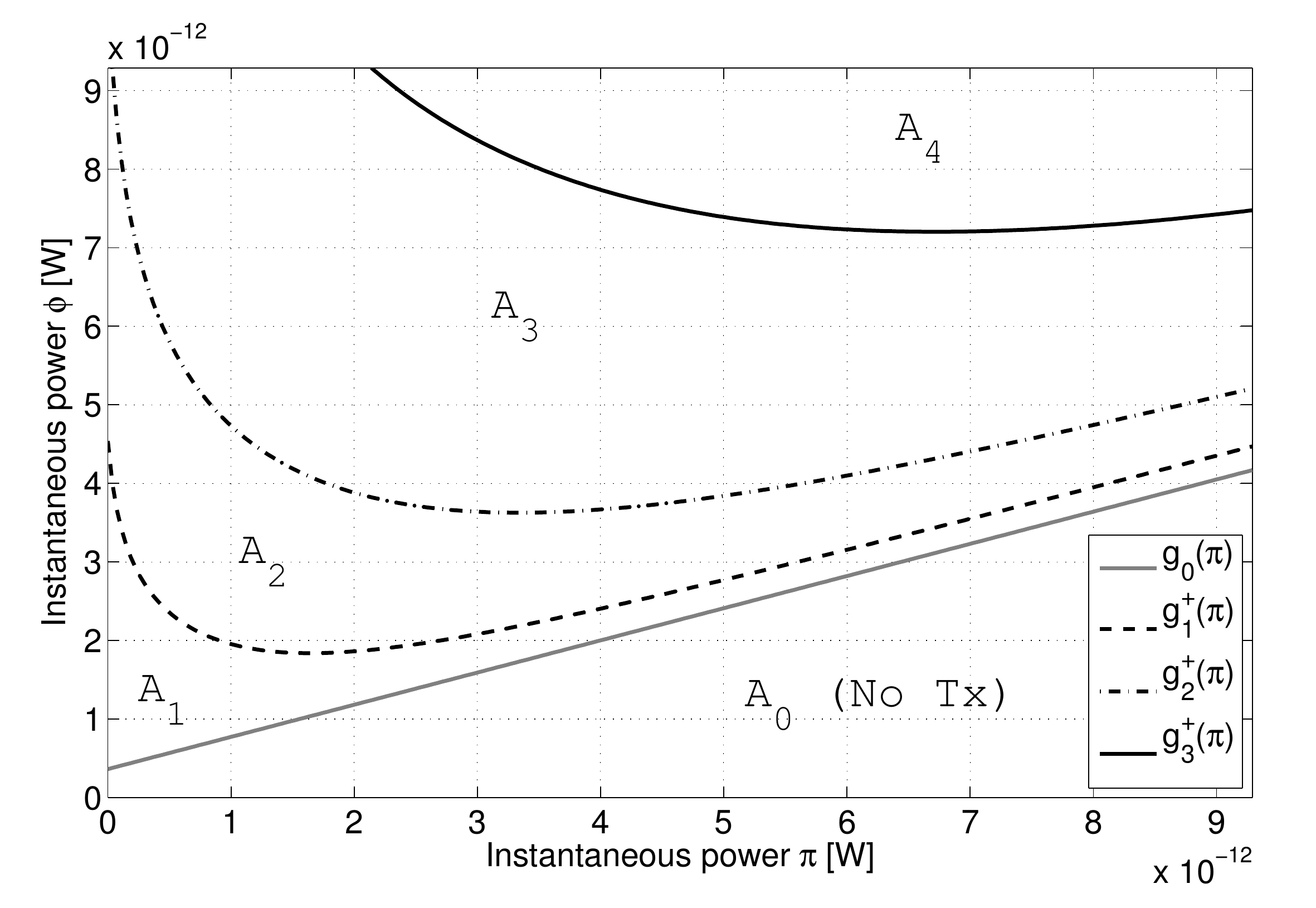}
     \label{fig:opt_multi_klow}
     }
     \subfigure[$k = 1.1613$]
     {\includegraphics[width=\figw]{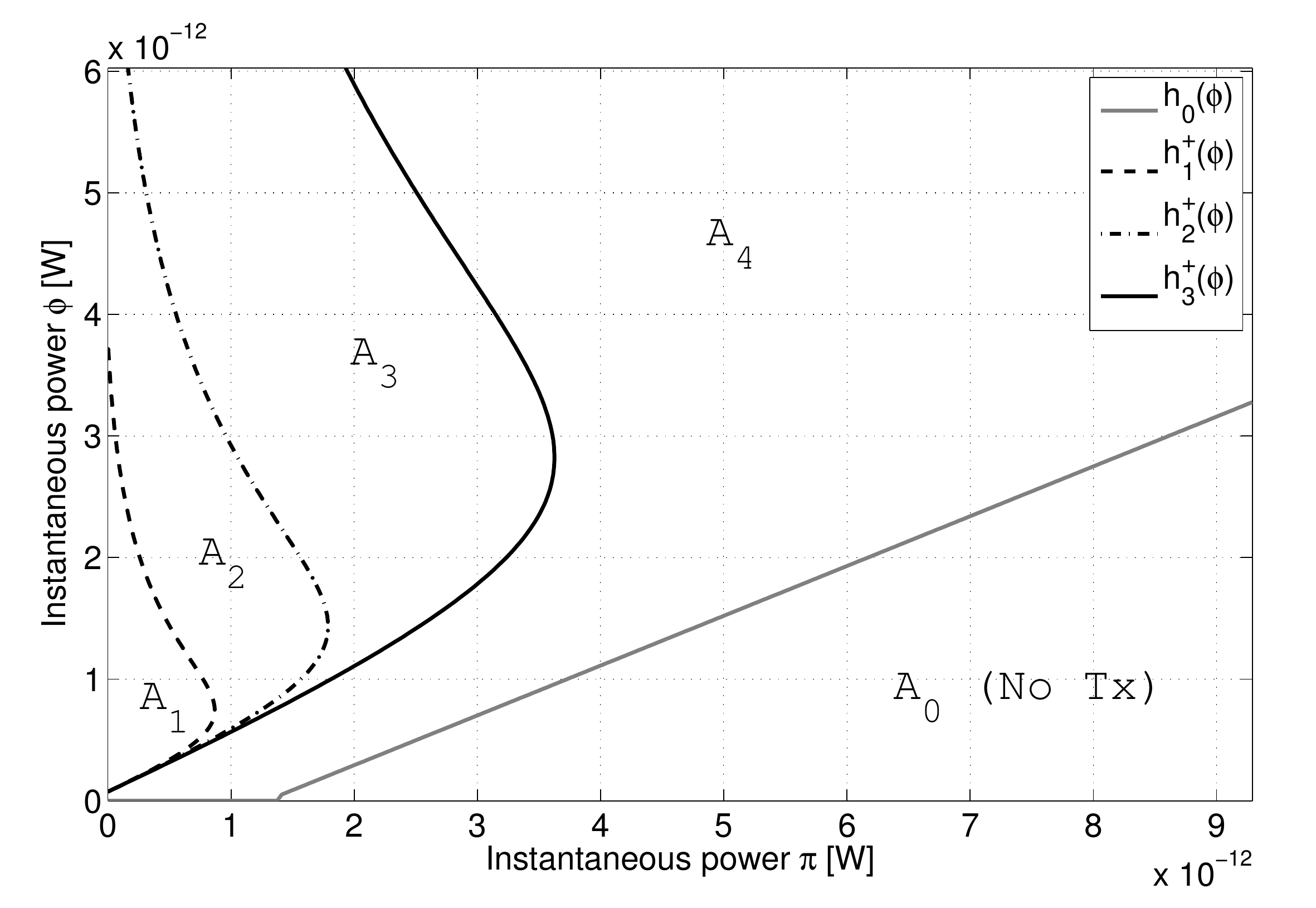}
     \label{fig:opt_multi_khigh}
     }
     \caption{\small The optimal policy for the scenario with the following coordinates: $B = (0,0)$, $S = (100, 0)$, $D = (100, 80)$ and $U = (0, 120)$. There are 4 power levels $P_i = 0.05\times2^{i-1}\spa W$, with $i\in\{1,2,3,4\}$. In case (a) we have $W = 10$, whereas in (b) $W = 3$, resulting in $k = 0.5571$ and $k = 1.1613$, respectively.} 
     \label{fig:opt_multi}
     \vspace{-1cm}
\end{figure}
An example of the shape of the optimal policy $\mu(\pi,\phi)$ is reported for $k<1$ in Fig.~\ref{fig:opt_multi}-(a), and for $k>1$ in Fig.~\ref{fig:opt_multi}-(b).

\subsection{Strategy implementation with multiple power levels: AWAm-S}
In Sec.~\ref{sec:D2Dpract_sinpow}, we introduced the basic implementation of AWA-S.
In this section, we provide an extension of this context aware strategy, namely AWAm-S.
The strategy behavior follows the one described for AWA-S, but now multiple predefined power levels are available for D2D communications. As before, the choice between the D2D and the D2B mode is made by comparing the expected reward $C_{D2D}$ and $C_{D2B}$. In this case, however, $C_{D2D}$ is derived as per the optimal policy for multiple power level scenario described in Sec.~\ref{sec:mulpowlev} and numerically computed. An example of an algorithm to derive it is reported in Appendix~\ref{app:algo}.

If D2D mode is selected, the source terminal $S$ uses the value of $k$ (previously derived to compute $C_{D2D}$) which, together with topological information, fully defines the curves $g_i^+(\pi)$ and $h_i^+(\phi)$, and therefore the optimal strategy. Then, at each time slot it collects information about the power received from $U$ at $D$ and $B$, corresponding to $\pi$ and $\phi$, respectively, and acts as per the optimal strategy described in (\ref{optpol_gen_lowk}) or (\ref{optpol_gen_highk}).

\section{Results}
\label{sec:results}
In this section, we present the performance of the AWA-S and the AWAm-S strategies. These strategies are compared to a state-of-the-art strategy~\cite{modpowcon}, which is renamed as GEO-S in this paper, and briefly described in Sec.~\ref{sec:geoS}. 

\subsection{Simulation scenario}
We focus on the single channel scenario depicted in Fig.~\ref{diag:scenario}, with one node $U$ transmitting to the BS $B$, and one additional node $S$ attempting a transmission to node $D$. In this scenario, node $S$ has the choice to transmit in D2D mode to $D$, or to rely on the BS to forward the packet to $D$. 

In the D2D case, $S$ should make sure to limit its interference to $B$, since the BS $B$ constantly monitors the SINR of the transmissions from $U$. Whenever this SINR falls below the decoding threshold $\theta$, while a simultaneous D2D transmission from $S$ has been performed, the BS forces $S$ to remain silent for a predefined amount of time, equal to $W$ time slots.
In the D2B case, instead, $S$ will share the channel (in TDMA) with $U$, so both users will be allowed to transmit for one time slot every two.

We do not model the downlink channel between $B$ and $D$, since we assume that the bottleneck lies in the uplink channel. Moreover, since different frequency bands are utilized for uplink and downlink, we consider full-duplex relaying at the BS.

We set the cellular radius to $R=250\spa\rm{m}$, the maximum distance between $S$ and $D$ to $L=100\spa\rm{m}$, the path loss exponent to $\alpha = 4$, the fixed path loss term to $A=1$, the target SNR at $B$ to $\rho= N_0 = -90\spa\rm{dBm}$, the decoding threshold to $\theta = 0\spa\rm{dB}$, and the discount factor of the MDP to $\gamma=0.99$. 

The results are obtained by averaging over $5\cdot 10^6$ time slots, obtained from $5\cdot 10^3$ randomly generated topologies. In each topology, we randomly deployed user $U$ within the cell ($d_{UB} \leq R$), user $S$ in the inner part of the cell ($d_{SB} \leq 0.75\cdot R$), and user $D$ in the same inner part, within a certain distance from $S$ ($d_{DB} \leq 0.75\cdot R$ and $d_{SD} \leq L$). 

\subsection{GEO-S}
\label{sec:geoS}
The GEO-S is a state-of-the-art strategy~\cite{modpowcon} based only on geographic considerations. The idea is to let a D2D communication between $S$ and $D$ as long as this does not cause an excessive expected interference at the BS $B$, otherwise $S$ switches to the D2B mode and alternates its transmissions with $U$.

The expected interference is estimated based only on geographic considerations, and the D2D mode is chosen if 
\begin{equation}
 T_d d_{SD}^{-\alpha} > d_{SB}^{-\alpha} \; ,
\end{equation}
where $T_d\geq 0$ is a tunable parameter, which is set to $T_d = 0.8$ in our simulations. Otherwise, the communication is done in D2B mode. 

In the case of the D2D mode is chosen, $S$ transmits with power equal to $\rho d_{SD}^{\alpha}$, otherwise $U$ and $S$ alternatively transmit to $B$ with power $\rho d_{UB}^{\alpha}$ and $\rho d_{SB}^{\alpha}$, respectively.
 
\subsection{Tradeoff of AWA-S}

As explained above, the target SNR at $B$ from user $U$ is fixed, and equal to $\rho/N_0$.
In order to protect the transmissions from $U$, the BS $B$ can properly set the length $W$ of the blockage period in the case of disturbance from $S$. By increasing the value of $W$, $B$ can limit the interference from $S$, thus reducing the impact of the D2D communication. However, if $W$ becomes too large, the choice of the D2D mode will not be advantageous for $S$ any more, i.e., $C_{D2D} < C_{D2B}$, so $S$ will switch to the D2B mode, and in this case $U$ will be able to transmit only in one time slot out of two, while the others will be allocated to $S$. 

On the other side, in AWA-S the user $S$ can set its target SNR $\xi$ at $D$. In general, setting a higher value for $\xi$ increases the received SINR at $D$, but it also implies an increased risk of triggering a blockage period from $B$.

In the following, we investigate the tradeoff between these two parameters ($\xi$ and $W$) and the performance of $U$ and $S$ in terms of throughput, defined as the number of packets received at the destination per time slot. Each source can attempt the transmission of a single packet in a time slot, thus the maximum throughput achievable, with a perfect channel and without interference, is equal to $1$.

First, we show how $\Omega_U$ and $\Omega_S$ (the throughput of $U$ and the one of $S$) vary with $W$ and $\xi$, then we show the throughput of the whole system ($\Omega_{U+S} = \Omega_U+\Omega_S$), and finally we focus on the fairness of the system, showing the minimum throughput between $U $ and $S$ ($\Omega_{\text{min}} = \min \{ \Omega_U,\Omega_S\}$).

In the baseline scenario, where D2D communications are not allowed, both $U$ and $S$ access the uplink channel for $50\%$ of the time, and the target SNR for both is set equal to $\rho/N_0$. The average throughput for each of them is
\begin{equation}
 \Omega_{0} = 0.5 \; e^{-\theta N_0 / \rho} = 0.18 \; \text{pkt/slot} .
\end{equation}

\subsection{Performance comparison}
\begin{figure}
    \centering
    \includegraphics[width=\figw]{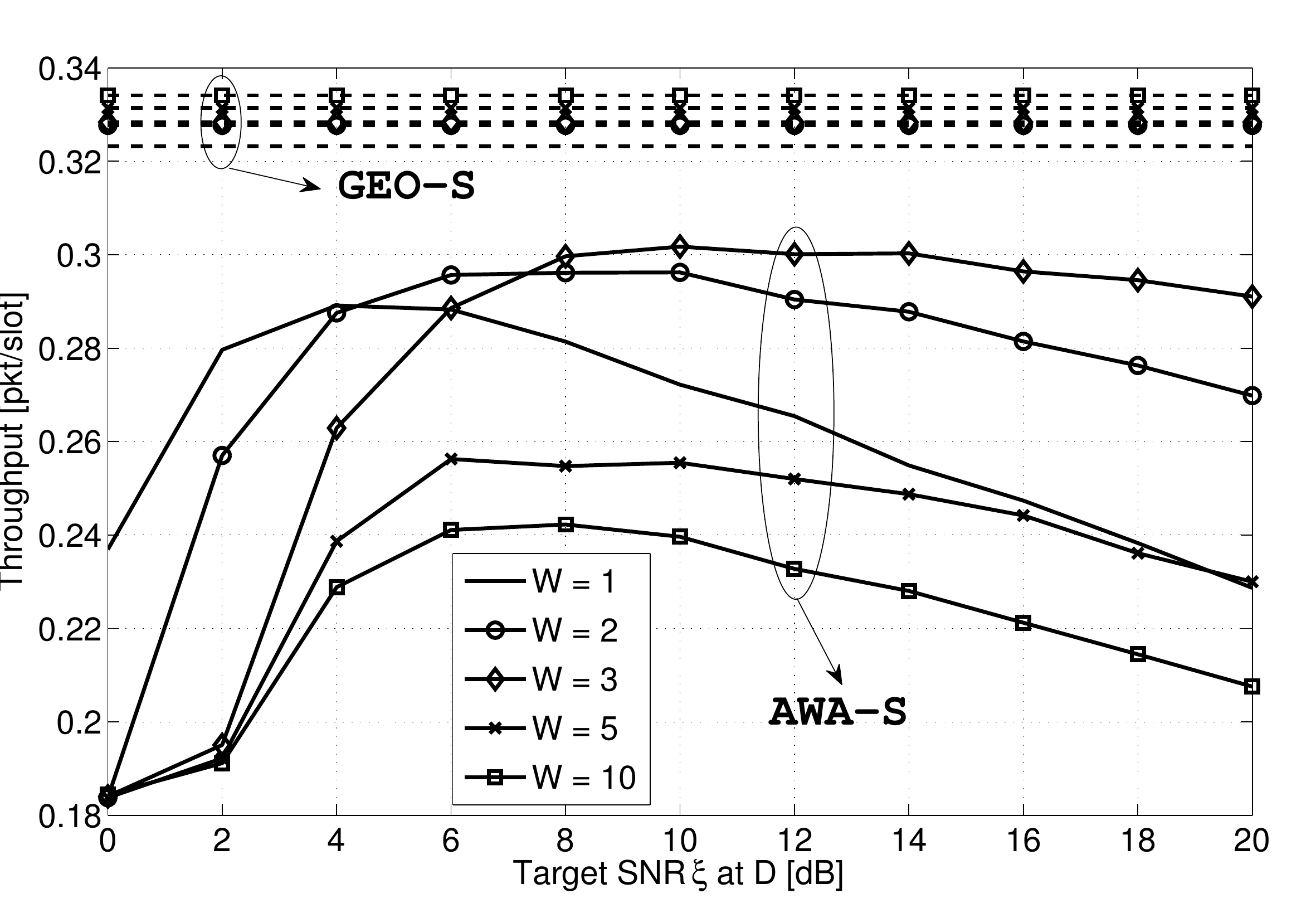}
     \caption{\small The average throughput of $U$, as a function of the target SNR $\xi$ that is set by $S$.}
     \label{fig:throU_vs_mu}
     \vspace{-1cm}
\end{figure}

In Fig.~\ref{fig:throU_vs_mu}, we depict $\Omega_U$, the throughput from U, as a function of the D2D target SNR $\xi$ and for different values of $W$. For AWA-S, we observe that for each value of $W$ there is a value of $\xi$ for which $\Omega_U$ is maximized. For higher values of $\xi$, the interference from $S$ becomes more significant. For lower values of $\xi$ instead $S$ is more likely to opt for a D2B transmission, and consequently $U$ can use only half of the time slots to transmit. 

On the other side, if the value of $\xi$ has been set by $S$, there exists an optimal value of $W$ to maximize $\Omega_U$.
A too low value of $W$ makes it convenient for $S$ to transmit even in case it will disturb $U$, while a too high value of $W$ will make it convenient for $S$ to switch to a D2B mode, thus exclusively  using half of the resources. We observe also that the optimal value for $W$ increases as $\xi$ increases, since a higher $\xi$ means a higher power for $S$, which also means a higher disturbance for $U$.
Finally, we observe that $\Omega_U$ is always higher for the GEO-S strategy. This is not surprising, since the GEO-S is designed to protect the transmissions by $U$ from any interference coming from $S$.
 
\begin{figure}
    \centering
    \includegraphics[width=\figw]{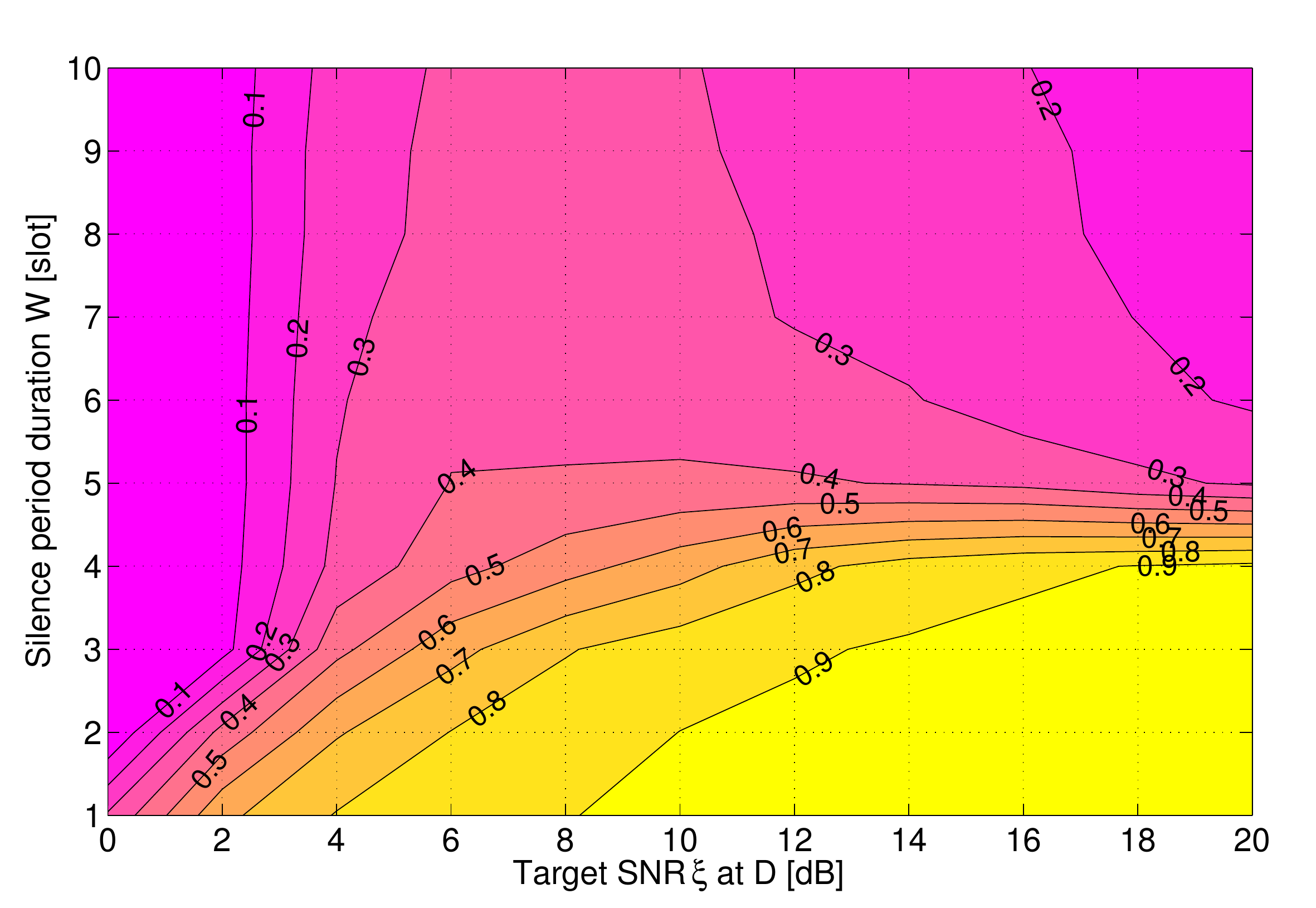}
     \caption{\small Probability of $S$ choosing the D2D transmission mode, as a function of both $W$ and $\xi$.}
     \label{fig:AwaMode}
     \vspace{-1cm}
\end{figure}

In order to better understand the functioning of AWA-S, in Fig.~\ref{fig:AwaMode} we plot the probability of choosing the D2D mode for $S$ in this scenario, as a function of the values of $W$ and $\xi$. In general, we observe that if $\xi$ is fixed, the probability of selecting the D2D mode decreases with $W$, as expected since a higher value of $W$ means a longer blockage period. 
We also observe that for $W\leq4$, the probability of choosing the D2D mode increases with $\xi$, since in case of a short blockage period, it is more convenient for $S$ to transmit in D2D at higher power. This observation is indeed no longer true for longer blockage periods, i.e., when $W \geq 4$.

\begin{figure}
    \centering
    {\includegraphics[width=\figw]{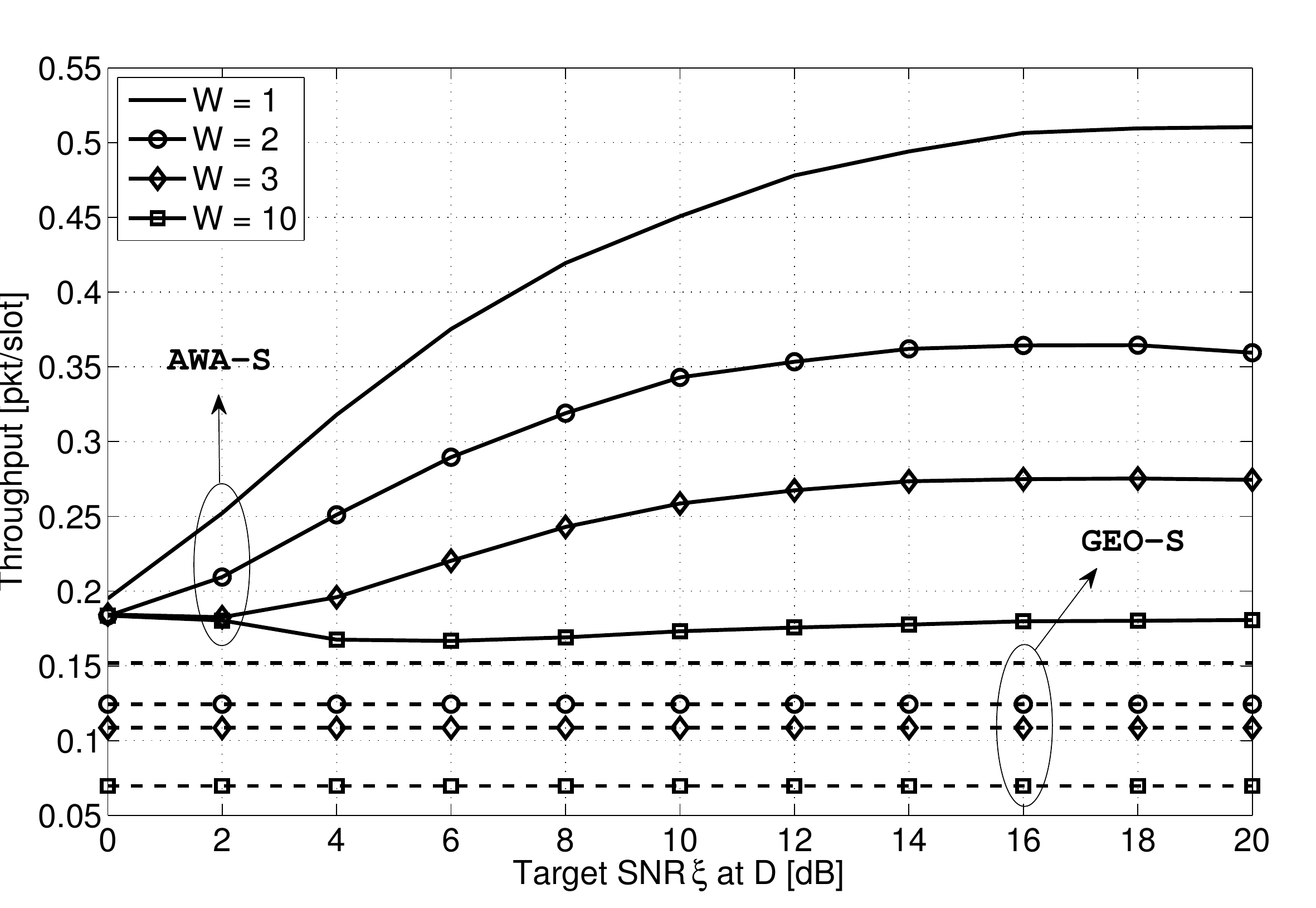}
     \caption{\small The average throughput of $S$, as a function of the target SNR $\xi$ that is set by $S$.}
     \label{fig:throS_vs_mu}
     }
     \vspace{-0.5cm}
\end{figure}

The value of $\Omega_S$, the throughput of $S$, is shown in Fig.~\ref{fig:throS_vs_mu}.
We observe that, in the AWA-S case, it is convenient for $S$ to increase the value of $\xi$, at least for low values of the blockage duration, $W \leq 4$. In any case, the value of $\xi$ can not be arbitrarily increased, otherwise a blockage will happen after every attempt of transmitting in D2D.
We also notice that $\Omega_S$ is much higher in the case of AWA-S than GEO-S, for any of the considered values of $W$ and $\xi$.

\begin{figure}
    \centering
     {\includegraphics[width=\figw]{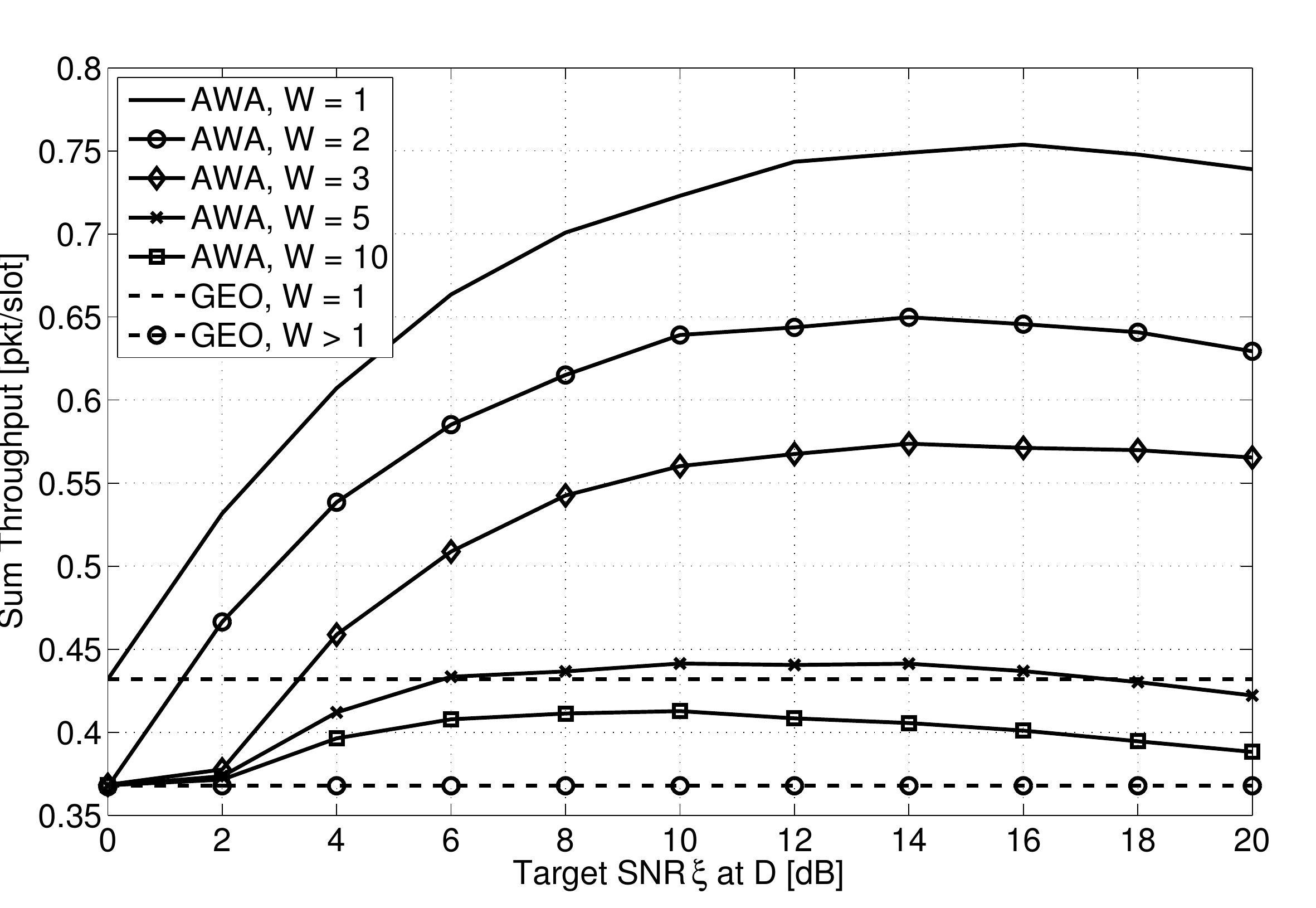}
     }
     \caption{\small The sum throughput of $U$ and $S$, as a function of the target SNR $\xi$ that is set by $S$.}
     \label{fig:throAll_vs_mu}
     \vspace{-1cm}
\end{figure}

We have seen that in general AWA-S guarantees a higher performance than GEO-S for user $S$, while the opposite is true for user $U$. In order to analyze the cost-benefit balance for AWA-S, we depict in Fig.~\ref{fig:throAll_vs_mu} the total system throughput, $\Omega_{U+S}$. In general, the maximum for the system throughput $\Omega_{U+S}$ is obtained for $W=1$, and for each value of $W$ the performance are maximized for one finite value of $\xi$. 
The total throughput increase for AWA-S as compared to GEO-S is particularly significant for low values of $W$ and $\xi >4\spa\rm{dB}$.
In particular, for $W=1$ and $\xi =16\spa\rm{dB}$, AWA-S outperforms GEO-S by about $75\%$ in terms of total throughput.

\begin{figure}
    \centering
     {\includegraphics[width=\figw]{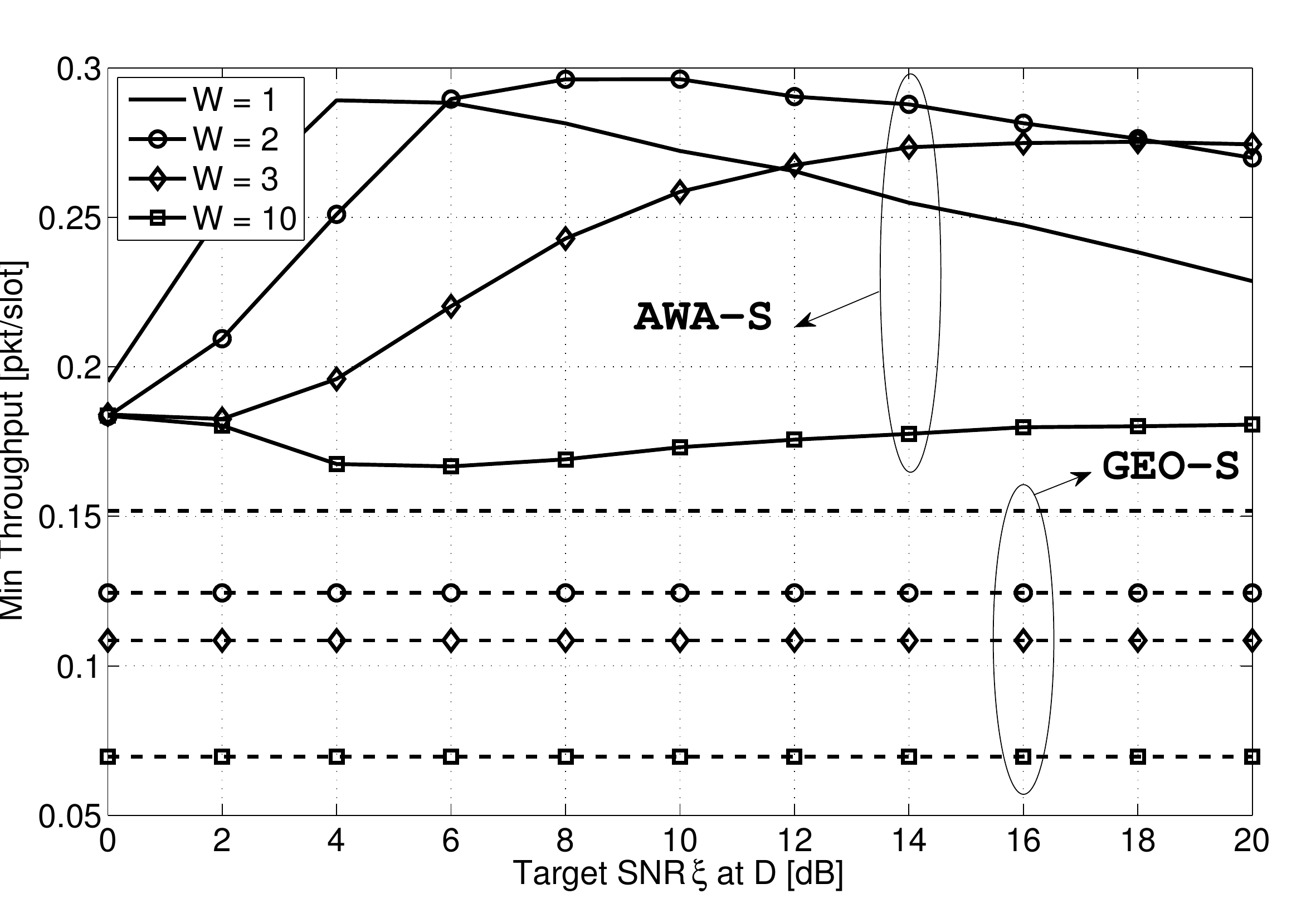}
     }
     \caption{\small The minimum throughput of $U$ and $S$, as a function of the target SNR $\xi$ that is set by $S$.}
     \label{fig:throMin_vs_mu}
     \vspace{-0.5cm}
\end{figure}

This significant increase in $\Omega_{U+S}$ comes at the cost of a decrease in $\Omega_{U}$, the throughput from $U$. In order to compare the fairness of the two strategies, in Fig.~\ref{fig:throMin_vs_mu} we depict the value of the minimum throughput between $\Omega_U$ and $\Omega_S$, i.e., $\Omega_{\text{min}}$. 
We observe that $\Omega_{\text{min}}$ is much higher for AWA-S than for GEO-S, since AWA-S can better balance the available resources. As shown in the figure, the value of $\xi$ should be limited, in order not to impair the transmissions from $U$.
The highest minimum throughput for AWA-S is obtained for $\xi = 10\spa\rm{dB}$ and $W = 2$. With these values, $\Omega_U= 0.30\spa\rm{pkt/slot}$, while $\Omega_S = 0.34\spa\rm{pkt/slot}$, and the total throughput is $\Omega_{U+S} = 0.64\spa\rm{pkt/slot}$. By using GEO-S in the same scenario, with $W=2$, we obtain a higher $\Omega_U = 0.33\spa\rm{pkt/slot}$, but only $\Omega_S = 0.12\spa\rm{pkt/slot}$, thus $\Omega_{U+S} = 0.45 \spa\rm{pkt/slot}$. In the same scenario, if D2D can not be employed and $U$ and $S$ will just alternate in transmitting to $B$, we obtain $\Omega_{U+S} = 2 \Omega_0 = 0.37\spa\rm{pkt/slot}$).

In other words, we can set AWA-S to achieve the maximum fairness. Even in this case, the relative gain in terms of system throughput over GEO-S is of about $42\%$, while over the case of no D2D transmission the relative gain is about $73\%$.

\subsection{Multiple power levels: AWAm-S}

\begin{figure}
    \centering
    {\includegraphics[width=\figw]{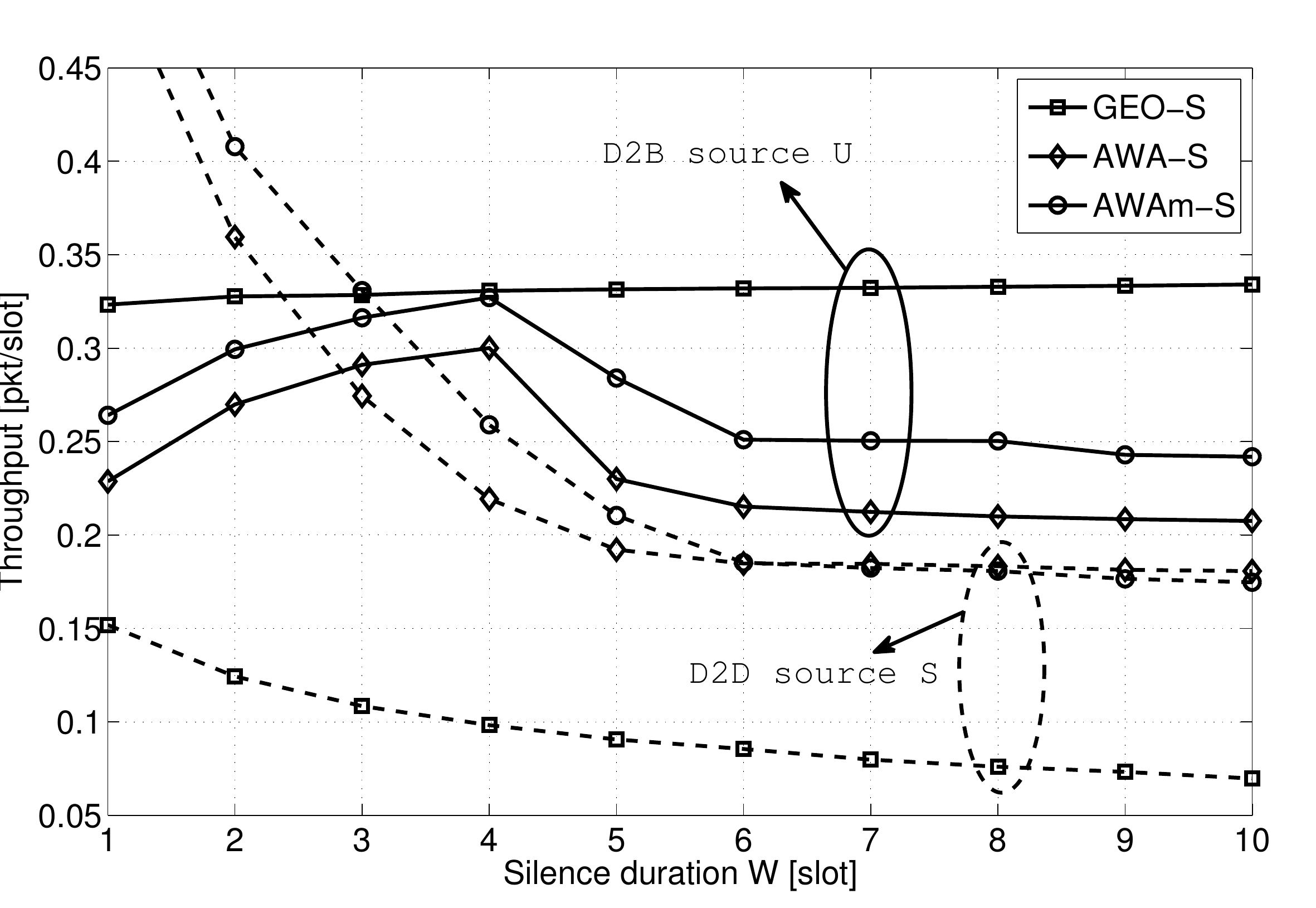}
     \caption{\small The throughput of $U$ and $S$, as a function of the blockage duration $W$, for GEO-S, AWA-S ($\xi=20\spa\rm{dB}$) and AWAm-S.}
     \label{fig:Multi_ThroSeU_vs_W}
     }
     \vspace{-1cm}
\end{figure}

In Fig.~\ref{fig:Multi_ThroSeU_vs_W},  we plot $\Omega_U$ and $\Omega_S$ for the GEO-S, the AWA-S (with $\xi = 20\spa\rm{dB}$), and the AWAm-S (with $12$ power levels from $-13\spa\rm{dBm}$ to $20\spa\rm{dBm}$).
Regarding $\Omega_U$, AWAm-S outperforms AWA-S for all the values of $W$ considered. It is interesting to focus on the performance in $W=4$, where AWAm-S obtains the same value for $\Omega_U$ as GEO-S. In the same point indeed AWAm-S is able to outperform GEO-S of about $160\%$ in terms of the relative throughput $\Omega_S$. In general, AWAm-S outperforms AWA-S for all the throughput values considered. 

The total throughput $\Omega_{U+S}$ can be easily derived from the figure. We observe that the maximum improvement with respect to GEO-S in the total throughput comes for $W=1$, at the cost of a severe decreasing of the performance for $U$. If we consider fairness instead, i.e., $\Omega_{\text{min}}$, the optimal solution for AWAm-S is obtained for $W = 3$. With this value, AWAm-S attains a minimum throughput of $\Omega_{\text{min}} = 0.32\spa\rm{pkt/slot}$, which is about 15\% higher than AWA-S, and almost three times the one offered by GEO-S.

\section{Conclusions}
\label{sec:conclu}
In this paper, we theoretically analyzed how context awareness can be used to exploit D2D communications with limited impact in terms of interference on the other ongoing communications.
We derived two theorems to define an optimal strategy that aims at setting up D2D communications only when the gain in terms of throughput overcomes the cost in terms of interference impact.
By comparing our strategy with the state-of-the-art work in the same scenario, we showed that a distributed context-aware D2D communications scheduling can lead to a substantial gain in terms of sum throughput and fairness.
In a future work, we plan to exted the analysis to more complex scenarios, where inter-cell interference is also modeled, and multi-hop communications can be established as well.

\appendices

\section{Proof of Lemma~\ref{lem:mu}.}
\label{app:proof1}
\begin{proof}
 We show here the proof for (\ref{condq}), an analogous one can be sketched for (\ref{condp}). Since $\mu(p,q)$ is a binary function, proving (\ref{condq}) is equivalent to prove the two following implications:
 \begin{eqnarray}
  \mu(\bar{p},q) = 0 & \Rightarrow & \mu(\bar{p}, q+h) = 0;\\
  \mu(\bar{p},q) = 1 & \Rightarrow & \mu(\bar{p}, q-h) = 1;
 \end{eqnarray}
For the first one, recall that $\mu(\bar{p},q) = 0$ implies that the expected reward obtained by avoiding transmission is greater than that obtained through a packet transmission. Looking at (\ref{sol_cont}), this means that by hypothesis it must be $\gamma C > \bar{p} + \gamma(1-q+\gamma^Wq)C$, being $C$ the value of the double integral.

Now, if it were $\mu(\bar{p}, q+h) = 1$, meaning that the optimal strategy at $(\bar{p}, q+h)$ is to transmit, it would be:
\begin{eqnarray}
 \gamma C & < & \bar{p} + \gamma(1-q-h+\gamma^W(q+h))C \nonumber\\
 & < & \bar{p} + \gamma(1-q+\gamma^Wq)C - \gamma hC(1-\gamma^W) \nonumber \\
 & < & \bar{p} + \gamma(1-q+\gamma^Wq)C
\end{eqnarray}
which contradicts the hypothesis.

Similarly, for the second implication, assuming $\mu(\bar{p},q) = 1$ is equivalent to the hypothesis $\gamma C < \bar{p} + \gamma(1-q+\gamma^Wq)C$. If it were $\mu(\bar{p}, q-h) = 0$, this would mean:
\begin{eqnarray}
 \gamma C & > & \bar{p} + \gamma(1-q+h+\gamma^W(q-h))C \nonumber\\
 & > & \bar{p} + \gamma(1-q+\gamma^Wq)C + \gamma hC(1-\gamma^W) \nonumber \\
 & > & \bar{p} + \gamma(1-q+\gamma^Wq)C
\end{eqnarray}
which is again contradictory. This proves (\ref{condq}), and a similar argument proves (\ref{condp}).
\end{proof}

\section{Proof of Lemma~\ref{lem:binary}}
\label{app:proof2}
\begin{proof}
Any transmission strategy has an expected reward $C$, and a punishment cost defined as $\gamma C(1-\gamma^W)$. For the optimal strategy, according to the previous case, we call $k$ the reciprocal of this cost. Recalling the fundamental reward equation (\ref{funrew}), it is preferable to defer transmission, rather than transmitting, when $q > kp$.
When multiple power levels are available, action $W$ is selected if it is preferable to the transmission with any power level. Therefore, it must be
\be
q(i) > kp(i),\quad \forall i\in(1,2,\ldots,N)
\ee
Expressing the probabilities as a function of the parameters $\pi$ and $\phi$, as stated above, turns the condition into:
\be
 \phi < \frac{a}{c}\pi + \frac{1}{c}\left(2^{i-1}\ln(k^{-1})+b-d\right) = \frac{a}{c}\pi + \xi(i),\quad \forall i\in(1,2,\ldots,N)
\ee
Hence, action $W$ is the optimal one for:
\be
 \phi < \frac{a}{c}\pi + \min_i(\xi(i))
\ee
This proves the lemma, with $m = a/c$, while $\xi_0 = \xi(1)$, if $k<1$, and $\xi_0 = \xi(N)$ otherwise.
\end{proof}

\section{Proof of Lemma \ref{lem:many_regions}}
\label{app:proof3}
\begin{proof}
We recall that the expected reward at any point $(\pi, \phi)\in\mathcal{U}$ using power level $i$ can be expressed as:
\begin{equation}
 r(\pi, \phi, i) = \gamma C + \exp\left(-\frac{a\pi+b}{2^{i-1}}\right) - \frac{1}{k}\exp\left(-\frac{c\phi+d}{2^{i-1}}\right)
 \label{reweq}
\end{equation}
where we plugged (\ref{defpq}) into (\ref{funrew}). Therefore, any point $(\pi, \phi)$ belongs to $\mathcal{A}_i$ if
\begin{equation}
 \left\{\begin{array}{l}
         i = \arg\max_j r(\pi, \phi, j) \\
         r(\pi, \phi, i) > 0
        \end{array}
\right.
\end{equation}
The first conditions states that the best transmission level is $i$, while the second ensures that transmitting is better than deferring.
Now, we prove by contradiction that any region $\mathcal{A}_i$, with $1\leq i\leq N-1$ can be adjacent only to the regions $\mathcal{A}_{i-1}$ and $\mathcal{A}_{i+1}$.
Since $r(\pi, \phi, i)$ is continuous $\forall i$, if there exists a boundary $\mathcal{B}\subset\mathcal{U}$ between $\mathcal{A}_i$ and $\mathcal{A}_{i+j}$, with $j>1$, it would follow that, for any $(\pi,\phi)\in\mathcal{B}$, $r(\pi,\phi,i) = r(\pi,\phi,i+j) > r(\pi,\phi,h)$, $\forall h\neq i,i+j$. In particular, if we chose $h = i+1$, we would have $r(\pi,\phi,i) = r(\pi,\phi,i+j) > r(\pi,\phi,i+1)$.

However, this is not possible. In fact, the curve $r(\pi,\phi,x)$, with $x\in\mathbb{R}^+$, either has a unique global maximum at
\begin{equation}
 x^* = 1 + \log_2\left(\frac{(c\phi+d) - (a\pi+b)}{\ln(c\phi+d) - \ln(a\pi+b)-\ln(k)}\right)
\end{equation}
or it is strictly monotonic over all its domain\footnote{it can be easily proved that the values $(\pi,\phi)$ where $r(\pi, \phi, x)$ is monotonically decreasing in $x$ all belong to $\mathcal{A}_0$.}.

It follows that if $r(\pi,\phi,i) = r(\pi,\phi,i+j)$, then $x^*$ exists in the interval $(i, i+j)$; moreover, $r(\pi,\phi,x)$ is greater than both $r(\pi,\phi,i)$ and $r(\pi,\phi,i+j)$ for any $x\in(i, i+j)$, and therefore also for $x = i+1$, which contradicts the hypothesis. This proves the lemma.
\end{proof}

\section{Proof of Lemma \ref{lem:boundaries}}
\label{app:proof4}
\begin{proof}
 We call $\mathcal{G}_i$ the subset of $\mathcal{U}$ where the reward obtained with power level $i$ is equal to that attained with power level $i+1$. Using (\ref{reweq}), this is equivalent to set $r(\pi,\phi,i)=r(\pi,\phi,i+1)$, resulting in the curve $\mathcal{G}_i(\pi,\phi)$ defined by the equation:
 \begin{equation}
 \exp\left(-\frac{a\pi+b}{2^{i-1}}\right) - \exp\left(-\frac{a\pi+b}{2^i}\right) = \frac{1}{k}\left(\exp\left(-\frac{c\phi+d}{2^{i-1}}\right) - \exp\left(-\frac{c\phi+d}{2^i}\right)\right)
 \label{geng}
\end{equation}
As observed in Lemma \ref{lem:many_regions}, at any point $(\pi,\phi)\in\mathcal{G}_i$, trasmitting with power level $i$ or $i+1$ gives the same reward, which is the highest achievable if a transmission is performed. However, in order for $(\pi,\phi)$ to be on the boundary between $\mathcal{A}_i$ and $\mathcal{A}_{i+1}$, it must also be that $r(\pi,\phi,i)=r(\pi,\phi,i+1)>0$, meaning that $(\pi,\phi)\notin\mathcal{A}_0$.
Therefore, the boundary between $\mathcal{A}_i$ and $\mathcal{A}_{i+1}$ corresponds to $\mathcal{G}_i^+ = \mathcal{G}_i\setminus\mathcal{A}_0$.

Equation (\ref{geng}) can be explicited as follows. If $k<1$, then we can rewrite it as the union of two continuous functions $g_i^-(\pi)$ and $g_i^+(\pi)$:
\begin{equation}
 g_i^{\pm}(\pi) = -\frac{d}{c} - \frac{2^i}{c}\ln\left(\frac{1}{2}\mp\frac{1}{2}\sqrt{1-4k\exp\left(-\frac{a\pi+b}{2^i}\right)\left(1-\exp\left(-\frac{a\pi+b}{2^i}\right)\right)}\right)
 \label{defg}
\end{equation}
which are both defined over all the domain $\mathbb{R}^+$. It can be shown mathematically that $g_i^-(\pi) < g_0(\pi) < g_i^+(\pi)$, $\forall i\in\{1,2,\ldots,N\}$ and $\forall\pi\in\mathbb{R}^+$. This means that $g_i^-(\pi)\subset\mathcal{A}_0$, whereas $g_i^+(\pi)\cap\mathcal{A}_0=\emptyset$.
Therefore, the only boundary $\mathcal{G}_i^+$ between $\mathcal{A}_i$ and $\mathcal{A}_{i+1}$ is the continuous function $g_i^+(\pi)$.

Conversely, if $k>1$, then equation (\ref{geng}) can be reformulated as:
\begin{equation}
 h_i^{\pm}(\phi) = -\frac{b}{a} - \frac{2^i}{a}\ln\left(\frac{1}{2}\pm\frac{1}{2}\sqrt{1-\frac{4}{k}\exp\left(-\frac{c\phi+d}{2^i}\right)\left(1-\exp\left(-\frac{c\phi+d}{2^i}\right)\right)}\right)
 \label{defh}
\end{equation}
which is again the union of two continuous functions defined over the entire domain $\mathbb{R}^+$. To complete the proof, we recall that, since $g_0(\pi)$ is monotonically increasing, it can be inverted into $h_0(\phi)$, and consequently $\mathcal{A}_0$ can be equivalently defined by the condition $\pi>h_0(\phi)$.
Setting the inequality shows that $h_i^+(\phi) < h_0(\phi) < h_i^-(\phi)$, $\forall i\in\{1,2,\ldots,N\}$ and $\forall\phi\in\mathbb{R}^+$. This implies that $h_i^-(\phi)\subset\mathcal{A}_0$ while $h_i^+(\phi)\cap\mathcal{A}_0=\emptyset$, which in turns means that the only boundary $\mathcal{G}_i^+$ between $\mathcal{A}_i$ and $\mathcal{A}_{i+1}$ is the continuous function $h_i^+(\phi)$.
\end{proof}

\section{Algorithm to compute $k$}
\label{app:algo}
The optimal strategy, once the topology is known, is completely defined by the value $k$, which appears in the definition of all the boundary functions $g_0(\pi)$, $g_i^+(\pi)$ and $h_i^+(\phi)$.
When multiple power levels are available, the derivation of $k$ can be done numerically. The first option is to use the iterative algorithm in (\ref{updapi}) and (\ref{updaV}). This algorithm is proved to converge to the optimal solution, but the convergence time rapidly grows with the number of states, which in turn depends on the quantization step adopted for $\pi$ and $\phi$ which are, respectively, the power received from $U$ at $D$ and $B$.
In fact, in this algorithm the future cost at any state $(\pi,\phi)$ is updated at each iteration, starting from 0. If $\gamma$ is close to 1, it may take several iterations before the value of the future cost achieves its asymptotic value.

A different way is based on the fact that the cost of the punishment is the same for any state $(\pi,\phi)$, and is equal to $1/k$, as per (\ref{funrew}). If we set an initial value for this cost, it is possible to refine this value at each step, until it is consistent with the overall expected reward $C$.

In other words, at each step $t$, we use the current value $C_t$ and the corresponding $k_t = (\gamma C_t(1-\gamma^W))^{-1}$ to determine the expected reward for any state $(\pi,\phi)$ and for any power level $i$ as per (\ref{reweq}):
\begin{equation}
 r(\pi, \phi, i) = \gamma C_t + p_i(\pi) - \frac{1}{k(t)}q_i(\phi)
\end{equation}
where, adapting (\ref{defpq}) to this specific scenario,
\begin{equation}
 p_i(\pi) = \exp\left(-\frac{\theta}{\gamma_{sb}^{(i)}}\frac{\pi}{N_0}\right)\exp\left(-\frac{\theta}{\gamma_{sd}^{(i)}}\right) \quad\quad  q_i(\phi) = \min\left(\exp\left(\frac{1-\phi/(\theta N_0)}{\gamma_{sb}^{(i)}}\right), 1\right)
\end{equation}
Notice that both $p_i(\pi)$ and $q_i(\phi)$ are to be computed only at the beginning of the algorithm.

Now, for any state $(\pi,\phi)$, we compute the maximum reward:
\begin{equation}
 r^*(\pi, \phi) = \max_{i\in\{1,2,\ldots, N\}}r(\pi,\phi,i)
\end{equation}
where $N$ is the total number of available power levels. If $r^*(\pi,\phi)<0$, then the best option is not to transmit at all, and we set correspondingly $r^*(\pi,\phi)=0$.

Having obtained the optimal reward for any state, we derive the new overall expected reward by numerically integrating:
\begin{equation}
C_{t+1} = \sum_{(\pi,\phi)\in\mathcal{S}}r^*(\pi,\phi)\phi_{\Pi,\Phi}(\pi,\phi)
\end{equation}
The obtained value of $C_{t+1}$ is finally used to derive the new value $k_{t+1}$:
\begin{equation}
 k_{t+1} = \frac{1}{\gamma C_{t+1}(1-\gamma^W)}
\end{equation}
A new iteration of the algorithm can therefore be performed, until $k_t$ converges with the desired precision to the effective value $k^*$.
\begin{algorithm}
\caption{Computation of $k$}
 \begin{algorithmic}
 \State $k \gets 1$
 \State $C \gets \frac{1}{\gamma k(1-\gamma^W)}$
 \State $\Delta \gets \infty$
 \While {$\Delta > t$}
 \For {$\pi\in\Pi, \phi\in\Phi$}
    \For {$P_i\in\mathcal{P}$}
      \State $r(\pi,\phi,i) \gets \gamma C + p_i(\pi) - q_i(\phi)/k$
    \EndFor
    \State $r^*(\pi,\phi) \gets \max_{i\in\{1,2,\ldots, N\}}r(\pi,\phi,i)$
    \State $r^*(\pi,\phi) \gets \max\left(r^*(\pi,\phi),0\right)$
  \EndFor
  \State $C \gets \sum_{(\pi,\phi)\in\Pi\times\Phi}r^*(\pi,\phi)\phi_{\Pi,\Phi}(\pi,\phi)$
  \State $k_t \gets \frac{1}{\gamma C(1-\gamma^W)}$
  \State $\Delta \gets |k-k_t|$
  \State $k\gets k_t$
  \EndWhile
  \Return $k$
 \end{algorithmic}
\end{algorithm}

\bibliographystyle{IEEEtran}
\bibliography{IEEEabrv,biblio}

\end{document}